\numberwithin{equation}{section}
\numberwithin{figure}{section}
\theoremstyle{plain}
\newtheorem{thm}{\protect\theoremname}
  \theoremstyle{plain}
  \newtheorem{lem}[thm]{\protect\lemmaname}
\newenvironment{lyxlist}[1]
{\begin{list}{}
{\settowidth{\labelwidth}{#1}
 \setlength{\leftmargin}{\labelwidth}
 \addtolength{\leftmargin}{\labelsep}
 }}
{\end{list}}
  \theoremstyle{plain}
  \newtheorem{prop}[thm]{\protect\propositionname}
  \theoremstyle{plain}
  \newtheorem{cor}[thm]{\protect\corollaryname}
  \theoremstyle{plain}
  \newtheorem*{thm*}{\protect\theoremname}
  \providecommand{\corollaryname}{Corollary}
  \providecommand{\lemmaname}{Lemma}
  \providecommand{\propositionname}{Proposition}
  \providecommand{\theoremname}{Theorem}
\providecommand{\theoremname}{Theorem}
\begin{document}

\title{Run Vector Analysis and Barker Sequences of Odd~Length}

\author{Jürgen Willms}

\email{willms.juergen@fh-swf.de}

\address{Institut für Computer Science, Vision and Computational Intelligence,
Fachhochschule Südwestfalen, D-59872 Meschede, Germany}
\begin{abstract}
The run vector of a binary sequence reflects the run structure of
the sequence, which is given by the set of all substrings of the run
length encoding. The run vector and the aperiodic autocorrelations
of a binary sequence are strongly related. In this paper, we analyze
the run vector of skew-symmetric binary sequences. Using the derived
results we present a new and different proof that there exists no
Barker sequence of odd length n >13. Barker sequences are binary sequences
whose off-peak aperiodic autocorrelations are all in magnitude at
most 1.
\end{abstract}

\keywords{binary sequence, Barker sequence, autocorrelation, run vector, run
structure, run, run length encoding, balanced, skew-symmetric}

\subjclass[2000]{11B83, 94A55, 68R15, 68P30 }

\date{30.10.14}

\maketitle

\section{Introduction}

Let $a=(a_{1},a_{2},\cdots,a_{n})$ be a sequence of real numbers
of length $n\geq1$. $a$ is called a \emph{binary} sequence if $a_{i}\in\{-1,1\}$
for all $i=1,\cdots,n$. For $k=0,1,\cdots,n-1$ the $k$th \emph{aperiodic
autocorrelation} of the binary sequence$a$ is defined by

\begin{equation}
C_{k}:=\sum_{i=1}^{n-k}a_{i}a_{i+k}.\label{eq:ck}
\end{equation}
Furthermore, we set $C_{n}:=0$. $C_{0}$ is called the \emph{peak
autocorrelation} of $a$ and we have $C_{0}=n$ and 
\begin{equation}
C_{k}\equiv n-k\bmod2\quad\mbox{for}\: k=1,2,\cdots,n.\label{eq:Ckmod2}
\end{equation}

In a wide range of engineering applications it is of interest to collectively
minimize the absolute values of the off-peak autocorrelations. Prominent
examples of such binary sequences are Barker sequences whose off-peak
aperiodic autocorrelations are in magnitude as small as possible.
Thus, a Barker sequence is a binary sequence for which $|C_{k}|\leq1$
for $k=1,2,\cdots,n-1$. If $n$ is odd, then a well-known result
in \cite{turyn1961binary} says, that Barker sequences of odd length
$n>13$ do not exist. However, in \cite{willms2014counterexamples}
it is shown that the proof of this result as presented in \cite{turyn1961binary}
is incomplete. It is at the moment not all at clear whether and how
this rather complicated proof can be fixed. 

Similar to \cite{borwein2013note}, one objective of this paper is
to find an alternative proof that is easier to understand. A main
idea of the original proof in \cite{turyn1961binary} is to show that
the first elements of a Barker sequence of odd length exhibit some
periodic behavior, which at the same time implies that the sequence
must be short. Run length encoding is a natural way to express this
“periodic behavior”. Since autocorrelation and run structure of a
binary sequence are strongly related (see \cite{willms2013RunStructure6506978}),
another objective of this paper is to explore whether the properties
of the run vector lead to an alternative proof. 

The here presented alternative proof that there exists no Barker sequence
of odd length $n>13$, relies on a careful analysis of the run vector
of skew-symmetric sequences. The run vector of a binary sequence (as
introduced in \cite{willms2013RunStructure6506978}) reflects the
run structure which is given by the set of all substrings of the run
length encoding. In order to simplify the definition of the run vector
we use a slightly different approach compared to \cite{willms2013RunStructure6506978}
where a more combinatorial approach is used. As shown in \cite{willms2013RunStructure6506978}
skew-symmetric sequences have a balanced run length encoding and vice
versa. Theorem 1 (see \cite{willms2013RunStructure6506978}) shows
how the run vector and the aperiodic correlations are related. For
the reader's convenience these two used results from \cite{willms2013RunStructure6506978}
are explicitly proven in Appendix A by using the here introduced terminology. 

Lemma \ref{lem:balanced_RkOdd} characterizes sequences having a balanced
run length in terms of their run vector; in particular, this result
is used in Proposition \ref{prop:Barker-Balanced} that a Barker sequence
of odd length has a balanced run length encoding. A key result is
Lemma \ref{lem:RkplusRk}, which can be used in order to show that
a Barker sequence of odd length must be short. Theorem \ref{thm:AllBarker}
describes the structure of the first elements of a Barker sequence
of odd length $n$ provided that $n>5$ and $r_{1}>1$; it also gives
an upper bound for $n$. The following Corollary \ref{cor:No-Barker}
states that Barker sequences of odd length $n>13$ do not exist. Finally,
assuming that $a_{1}=a_{2}$ we show in Corollary \ref{thm:barkerRLE-1-1}
that for a Barker sequence of odd length $n\geq3$ either $r=(2,1)$,
$r=(3,1,1)$, $r=(3,2,1,1)$, $r=(3,3,1,2,1,1)$ or $r=(5,2,2,1,1,1,1)$
where $r$ denotes the run length encoding of $a$. 

As usual, we exploit the fact that periodic and aperiodic autocorrelations
are related. For the definition of the periodic autocorrelations put
$a_{n+i}:=a_{i}$ for $i\geq1$; then for $k=0,1,\cdots,n-1$ the
$k$th \emph{periodic autocorrelation }is given by

\begin{equation}
\tilde{C}_{k}:=\sum_{i=1}^{n}a_{i}a_{i+k}.\label{eq:pck}
\end{equation}
Note further that for $k=1,\cdots,n-1$ we have $\tilde{C}_{k}=\tilde{C}_{n-k}$
and 
\begin{equation}
\tilde{C}_{k}=C_{k}+C_{n-k}.\label{eq:Ckperiodic}
\end{equation}

If the binary sequence $a$ is constant then $\tilde{C}_{k}=n$. Moreover,
inverting one element $a_{i}$ of $a$ either changes the sum $\tilde{C}_{k}$
by $\pm4$ or leaves $\tilde{C}_{k}$ unchanged. Therefore, for $k=0,\cdots,n-1$
we obtain

\begin{equation}
\tilde{C}_{k}\equiv n\bmod4.\label{eq:Ckmod4}
\end{equation}

Note that if $a$ is a Barker sequence of odd length $n=2m-1$ then
by (\ref{eq:Ckperiodic}) and (\ref{eq:Ckmod4}) we obtain for $k=1,2,\cdots,n-1$ 

\begin{equation}
C_{k}=\begin{cases}
0\; & \mbox{if }k\mbox{ odd}\\
(-1)^{m+1}\; & \mbox{\mbox{if }\ensuremath{k}\mbox{ even}.}
\end{cases}\label{eq:CkBarker}
\end{equation}

\pagebreak{}

\section{Preliminaries }

In the following $a$ will always be a fixed binary sequence of length
$n$. To circumvent boundary problems we will put $a_{0}:=0$. Furthermore,
$r=(r_{1},r_{2},\cdots,r_{\gamma})$ will always denote the run length
encoding of $a$ and $\gamma$ the length of $r$. In particular,
we have $\sum_{j=1}^{\gamma}r_{j}=n=C_{0}$. Moreover, if we put $s_{0}:=0$
and $s_{j+1}:=s_{j}+r_{j+1}$ then $a_{s_{j}}\neq a_{s_{j}+1}$ and
$a_{s_{j}+1}=a_{s_{j}+2}=\cdots=a_{s_{j+1}}$ for all $j=0,1,\cdots,\gamma-1$.
Note that $1\leq s_{1}<s_{2}<\cdots<s_{\gamma}=n$ and 
\begin{equation}
s_{j}=r_{1}+r_{2}+\cdots+r_{j}\quad\mbox{for }j=1,2,\cdots,\gamma\label{eq:s-def}
\end{equation}

Observe that

\begin{equation}
C_{1}=1+n-2\gamma\label{eq:C1_gamma}
\end{equation}
since for each $j=1,2,\cdots,\gamma-1$ the $j$th run of $a$ contributes
$r_{j}-2$ to the sum $C_{1}(a)$ and the last run contributes $r_{\gamma}-1$
to the sum $C_{1}$; thus $C_{1}=1+\sum_{j=1}^{\gamma}(r_{j}-2)=1+n-2\gamma$.
In \cite{willms2013RunStructure6506978} it is shown that the run
structure of $a$ can be used in order to efficiently calculate the
autocorrelations of $a$. In order to formulate these results we need
a few more definitions from \cite{willms2013RunStructure6506978}.

Let $t_{0}:=0$ and $t_{j+1}:=t_{j}+r_{\gamma-j}$ for $j=0,1,\cdots,\gamma-1$.
Note that $1\leq t_{1}<t_{2}<\cdots<t_{\gamma}=n$ and that for $j=1,2,\cdots,\gamma$
\begin{equation}
t_{j}:=r_{\gamma}+r_{\gamma-1}+\cdots+r_{\gamma-j+1}\label{eq:t-def}
\end{equation}

and

\begin{equation}
s_{j}+t_{\gamma-j}=n.\label{eq:s+t}
\end{equation}
As in \cite{willms2013RunStructure6506978} let 
\begin{equation}
S:=\{s_{1},s_{2},\cdots,s_{\gamma-1}\}\label{eq:S_def}
\end{equation}
\begin{equation}
T:=\{t_{1},t_{2},\cdots,t_{\gamma-1}\}.\label{eq:T_def}
\end{equation}

Note that for $1\leq k\leq n-1$ we have 
\begin{equation}
k\in S\Leftrightarrow a_{k}a_{k+1}=-1\quad\mbox{and}\ensuremath{\quad k\in S\Leftrightarrow n-k\in T.}\label{eq:Sakplus1}
\end{equation}

In the following consider the functions $f_{S},\: f_{T},\: f:\;\mathbb{Z}\rightarrow\{-1,0,1\}$
defined by
\begin{equation}
f_{S}(k):=\begin{cases}
(-1)^{j}\; & \mbox{if }k\in S\mbox{ with \ensuremath{k=s_{j}}}\\
0\; & \mbox{otherwise}
\end{cases}\label{eq:f-def}
\end{equation}
\begin{equation}
f_{T}(k):=\begin{cases}
(-1)^{j}\; & \mbox{if }k\in T\mbox{ with \ensuremath{k=t_{j}}}\\
0\; & \mbox{otherwise}
\end{cases}\label{eq:g-def}
\end{equation}

\begin{equation}
f(k):=f_{S}(k)+f_{T}(k).\label{eq:def-f}
\end{equation}

Note that by (\ref{eq:s+t}) we have for all $k\mathbb{\in Z}$ 
\begin{equation}
f_{S}(k)=(-1)^{\gamma}f_{T}(n-k).\label{eq:f=00003Dg}
\end{equation}
\pagebreak{}

For $k=1,2,\cdots,n-1$ define

\begin{equation}
U_{k}:=\sum_{j=1}^{\gamma-1}(-1)^{j}f_{T}(k-s_{j}).\label{eq:def-Uk}
\end{equation}

Note that $U_{k}=0$ if $1\leq k\leq s_{1}.$ Next we will define
the run vector $R=(R_{1},R_{2},\cdots,R_{n-1})$. The run vector $R$
of a binary sequence reflects the run structure, which is given by
the set of all substrings of $r$. Unlike in \cite{willms2013RunStructure6506978},
where a more combinatorial approach based on the run structure is
chosen, we will use $U_{k}$ in order to define $R_{k}$; therefore,
for $k=1,2,\cdots,n-1$ define 

\begin{equation}
\tilde{R}_{k}:=f(k)+2U_{k}\qquad\mbox{and}\qquad R_{k}:=(-1)^{\gamma}\tilde{R}_{n-k}.\label{eq:rk-def}
\end{equation}

The following result from \cite{willms2013RunStructure6506978} which
we will use for the proof of our main result shows how the aperiodic
autocorrelations and the run vector are related.
\begin{thm}[see \cite{willms2013RunStructure6506978}]
\label{thm:main_aperiodic-1}$\mbox{Let }k=1,\cdots,n-1$; then
\end{thm}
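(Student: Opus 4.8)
The statement completes the sentence ``how the aperiodic autocorrelations and the run vector are related,'' so I read it as the second-difference identity
\[
C_{k-1}-2C_k+C_{k+1}=-2R_k,\qquad k=1,\dots,n-1,
\]
equivalently $R_k=C_k-\frac{1}{2}\bigl(C_{k-1}+C_{k+1}\bigr)$, using the conventions $C_0=n$ and $C_n=0$ already fixed above. The plan is to route both sides through the \emph{first-difference sequence} of $a$. Concretely, I would set $a_{n+1}:=0$ (in addition to $a_0:=0$) and define $e_i:=a_i-a_{i-1}$ for $i=1,\dots,n+1$. Since $a$ is constant along each run, $e_i$ vanishes except at the run boundaries: $e_i\neq0$ exactly when $i-1\in S$ or $i\in\{1,n+1\}$, that is, precisely for $i\in\{s_j+1:j=0,1,\dots,\gamma\}$. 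At the two ends $e$ takes the values $\pm1$, and at each interior boundary it takes the values $\pm2$, with the signs alternating according to the run-sign pattern.

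First I would establish, by applying Abel summation twice, the purely formal identity $\sum_i e_i e_{i+k}=-\bigl(C_{k-1}-2C_k+C_{k+1}\bigr)$; this is the discrete analogue of the fact that the autocorrelation of the derivative is minus the second derivative of the autocorrelation, and the boundary conventions $a_0=a_{n+1}=0$ are exactly what make the edge terms cancel correctly. Second, I would evaluate $\sum_i e_i e_{i+k}$ directly as a sum over pairs of boundary positions lying at distance $k$: each such pair contributes the product of its two $e$-values with the appropriate alternating sign. The two contributions involving an endpoint ($\pm1$ times an interior $\pm2$, or the two endpoints) are exactly what $f_S$ and $f_T$ record, while the contributions coming from pairs of interior boundaries are captured by the shifted convolution $U_k=\sum_{j=1}^{\gamma-1}(-1)^j f_T(k-s_j)$, explaining the factor $2$ in $\tilde R_k=f(k)+2U_k$. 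Matching these contributions term by term, and then transporting the result through $R_k=(-1)^\gamma\tilde R_{n-k}$ with the help of the reflection identity~(\ref{eq:f=00003Dg}), should give $\sum_i e_i e_{i+k}=2R_k$, whence the claimed identity.

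The main obstacle is the bookkeeping in this second step: one must align the $\pm1$ endpoint terms with $f_S$ and $f_T$, the $\pm2$ interior-pair terms with $2U_k$, and the overall reflection/sign $(-1)^\gamma$, $n-k$ with the definition of $R_k$. The definitions of $f_S,f_T,U_k$ and~(\ref{eq:f=00003Dg}) are tailored precisely so that this correspondence holds, so the difficulty is not in discovering the bookkeeping but in verifying it carefully, keeping strict control of the alternating signs and of the boundary runs $r_1$ and $r_\gamma$; I would organize the verification by splitting the sum according to whether neither, one, or both of the paired boundaries is an endpoint.
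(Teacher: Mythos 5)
Your proposal is correct and follows essentially the same route as the paper's own proof in Appendix~A: there the author also introduces the difference sequence $\delta_i=a_i-a_{i-1}$ with $a_0=a_{n+1}:=0$, shows by direct expansion that $C_k(\delta)=-\bigl(C_{k+1}-2C_k+C_{k-1}\bigr)$, and then evaluates $\tfrac12 C_k(\delta)$ via $2f_S(i)=a_1\delta_{i+1}$, the reflection identity~(\ref{eq:f=00003Dg}), and the sum $U_{n-k}$ to obtain $(-1)^\gamma\tilde R_{n-k}=R_k$. The bookkeeping you defer (endpoint terms giving $f$, interior pairs giving the $2U$ term, then transporting by $R_k=(-1)^\gamma\tilde R_{n-k}$) is exactly the chain of equalities carried out there, so your plan is sound and matches the paper.
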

\[
C_{k+1}-2C_{k}+C_{k-1}=-2R_{k}
\]

\begin{proof}
Corollary 8 in \cite{willms2013RunStructure6506978} shows that $R_{k}$
is $k$th component of the\emph{ }run vector\emph{ }$R$ of $a$ and
thus the result follows directly from Theorem 1 in \cite{willms2013RunStructure6506978}. 

In Appendix A we give a separate proof of Theorem 1 without referring
to \cite{willms2013RunStructure6506978}. 
\end{proof}

\subsection{Skew-Symmetric Sequences and Balanced Run Length Encoding}

An odd length binary sequence $a$ of length $2m-1$ is called \emph{skew-symmetric}
if for all $j=1,2,\cdots,m-1$ 
\begin{equation}
a_{m-j}=(-1)^{j}a_{m+j}.\label{eq:skew}
\end{equation}

Skew-symmetric sequences are of particular interest in different areas;
as we will see Barker sequences of odd length are for example are
skew-symmetric. It is not difficult to see that for a skew-symmetric
sequence $a$ we have $C_{k}=0$ if $k$ is odd. In particular, by
(\ref{eq:C1_gamma}) it follows that $\gamma=\frac{n+1}{2}=m$. Furthermore,
by Theorem \ref{thm:main_aperiodic-1} we have $C_{k+1}-2C_{k}+C_{k-1}=-2R_{k}$
for all $k=1,2,\cdots,n-1.$ Therefore for a skew-symmetric sequence
$a$ and $1\leq k<n$ we obtain $R_{k}=C_{k}$ if $k$ is even and
$R_{k}=-\frac{1}{2}(C_{k-1}+C_{k+1})$ if $k$ is odd.

As in \cite{willms2013RunStructure6506978} we call a run length encoding
$r$ \emph{balanced} if
\begin{equation}
S\cup T=\{1,2,\cdots,n-1\}\;\mbox{and }\; S\cap T=\varnothing.\label{eq:SvT}
\end{equation}
 Note that if $r$ is balanced then $2(\gamma-1)=n-1$ and hence $n$
is odd and $\gamma=\frac{n+1}{2}$. Note that this definition of a
balanced run length encoding differs considerably from the definition
of ``balanced binary sequences'' in the terminology of \cite{golomb1967shift},
where it is required that the number of ones is nearly equal to the
number of minus ones. In \cite{willms2013RunStructure6506978} it
is shown that a binary sequence $a$ is skew-symmetric if and only
if its run length encoding $r$ is balanced; a separate proof of this
result is given in Proposition \ref{prop:-balanced-is-skew-symmetric}
in Appendix A.\pagebreak{}
\begin{lem}
\label{lem:balanced_RkOdd} $r$ is balanced if and only if $\tilde{R}_{k}$
is odd for all \textup{$1\leq k<n$ }\end{lem}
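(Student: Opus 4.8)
The plan is to exploit the defining relation $\tilde{R}_{k}=f(k)+2U_{k}$ from (\ref{eq:rk-def}). Since the term $2U_{k}$ is always even, the parity of $\tilde{R}_{k}$ coincides with the parity of $f(k)$ for every $k$. Thus the statement reduces entirely to a claim about $f$: I must show that $f(k)$ is odd for all $1\leq k<n$ if and only if $r$ is balanced. In particular $U_{k}$ plays no role whatsoever in the argument, which is the key simplification.

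Next I would determine exactly when $f(k)=f_{S}(k)+f_{T}(k)$ is odd by a short case analysis on the membership of $k$ in $S$ and $T$. By (\ref{eq:f-def}) and (\ref{eq:g-def}), $f_{S}(k)$ is nonzero (and equal to $\pm1$) precisely when $k\in S$, and likewise $f_{T}(k)$ is nonzero precisely when $k\in T$. Hence there are three cases: if $k$ lies in neither $S$ nor $T$ then $f(k)=0$; if $k$ lies in exactly one of them then $f(k)=\pm1$; and if $k$ lies in both then $f(k)\in\{-2,0,2\}$, being a sum of two values drawn from $\{-1,1\}$. The conclusion I would extract is that $f(k)$ is odd if and only if $k$ belongs to exactly one of $S$ and $T$.

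Finally I would translate the condition ``$f(k)$ is odd for every $1\leq k<n$'' into the balanced condition (\ref{eq:SvT}). By the previous step this condition says that every $k\in\{1,\dots,n-1\}$ lies in exactly one of $S$ and $T$. Since $S,T\subseteq\{1,\dots,n-1\}$ (because $s_{1},\dots,s_{\gamma-1}$ and $t_{1},\dots,t_{\gamma-1}$ all lie strictly between $0$ and $n$), the assertion ``every element lies in exactly one'' is equivalent to the conjunction $S\cup T=\{1,\dots,n-1\}$ and $S\cap T=\varnothing$, which is precisely the definition of balanced in (\ref{eq:SvT}). This yields both directions of the equivalence simultaneously.

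There is no deep obstacle here; the proof is essentially a parity-bookkeeping argument. The one point that needs genuine care is the case $k\in S\cap T$: one must observe that both summands are then $\pm1$, so their sum is even regardless of the signs, in order to rule out an accidental odd value. Everything else is a direct unwinding of the definitions of $f_{S}$, $f_{T}$, and the balanced condition.
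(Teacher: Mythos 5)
Your proposal is correct and follows essentially the same route as the paper's proof: both reduce the parity of $\tilde{R}_{k}$ to that of $f(k)$ via $\tilde{R}_{k}=f(k)+2U_{k}$, and then observe that $f(k)$ is odd exactly when $k$ lies in precisely one of $S$ and $T$, which for all $k$ is the balanced condition. Your treatment of the case $k\in S\cap T$ is slightly more explicit than the paper's, but the argument is the same.
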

\begin{proof}
If $r$ is balanced then for all $k=1,2,\cdots,n-1$ either $k\in S$
or $k\in T$; thus $|f(k)|$= 1 and by (\ref{eq:rk-def}) $\tilde{R}_{k}$
is odd. On the other hand, suppose that $\tilde{R}_{k}$ is odd for
all $1\leq k<n$. Then $f(k)$ is odd and thus $|f(k)|=1$ for all
$1\leq k<n$; hence either $k\in S\setminus T$ or $k\in T\setminus S$
. Therefore, $S\cap T=\varnothing$ and $S\cup T=\{1,2,\cdots,n-1\}$,
which shows that $r$ is balanced. \end{proof}
\begin{lem}
\label{lem:fT}Let $r$ be balanced and assume that $s_{\mu-1}<k<s_{\mu}$
for some $1\leq\mu\leq\gamma$. Then $f(k)=f_{T}(k)=-(-1)^{k+\mu}$. \end{lem}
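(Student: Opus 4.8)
The plan is to reduce the claim to a counting argument that pins down the index $j$ for which $k=t_j$, after which the asserted parity drops out immediately.

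First I would dispatch the equality $f(k)=f_T(k)$. Since $s_{\mu-1}<k<s_\mu$, the integer $k$ lies strictly between two consecutive members of the increasing sequence $s_0=0<s_1<\cdots<s_\gamma=n$, so $k\notin S$ and hence $f_S(k)=0$ by (\ref{eq:f-def}); by (\ref{eq:def-f}) this gives $f(k)=f_T(k)$. Moreover $1\le k\le n-1$, so balancedness (\ref{eq:SvT}) forces $k\in T$, say $k=t_j$, whence $f_T(k)=(-1)^j$ by (\ref{eq:g-def}).

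Next I would determine $j$ by counting. The elements of $S$ lying below $k$ are exactly $s_1,\dots,s_{\mu-1}$, so there are $\mu-1$ of them; this holds uniformly for $1\le\mu\le\gamma$, including the boundary cases $\mu=1$ (where no $s_i<k$) and $\mu=\gamma$ (where all of $S$ lies below $k$), precisely because $k\notin S$. Because $r$ is balanced, $S$ and $T$ partition $\{1,\dots,n-1\}$, so among the $k-1$ integers in $\{1,\dots,k-1\}$ exactly $\mu-1$ lie in $S$ and therefore $(k-1)-(\mu-1)=k-\mu$ lie in $T$. But the members of $T$ below $k$ are precisely $t_1,\dots,t_{j-1}$, giving $j-1=k-\mu$, i.e.\ $j=k-\mu+1$. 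Substituting, $f_T(k)=(-1)^{k-\mu+1}=-(-1)^{k+\mu}$, as claimed.

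The parity bookkeeping at the end is routine, since the exponents of $(-1)^{k-\mu+1}$ and of $-(-1)^{k+\mu}$ differ by $2\mu$ and thus give the same sign. The one place that needs care — and what I regard as the main obstacle — is verifying that the count of $S$-elements below $k$ equals $\mu-1$ across all boundary values of $\mu$, and confirming that the partition identity is applied to the range $\{1,\dots,k-1\}$ rather than $\{1,\dots,k\}$; once the relation $j=k-\mu+1$ is secured, the conclusion follows at once.
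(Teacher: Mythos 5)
Your proof is correct and follows essentially the same route as the paper's: balancedness forces $k\in T$ with $k=t_j$, the index is pinned down as $j=k-(\mu-1)$, and the sign identity follows. The paper states the relation $j=k-(\mu-1)$ without justification, whereas you supply the counting argument (partitioning $\{1,\dots,k-1\}$ between $S$ and $T$) that establishes it, including the boundary cases $\mu=1$ and $\mu=\gamma$.
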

\begin{proof}
Let $s_{\mu-1}<k<s_{\mu}$ for some $1\leq\mu\leq\gamma.$ Since $r$
is balanced, we have $k\in T$ and thus $k=t_{j}$ for some $1\leq j<\gamma$.
Since $r$ is balanced we have $j=k-(\mu-1)$ and hence $f(k)=f_{T}(k)=(-1)^{j}=-(-1)^{k+\mu}$.\end{proof}
\begin{lem}
\label{lem:FunctionG}Let $r$ be balanced and let $s\in S$ be odd;
then
\begin{lyxlist}{00.00.0000}
\item [{(i)}] $f(s-1)=-f(s)\quad$ if $s>1$ 
\item [{(ii)}] $f(s+1)=f(s)\quad$ if $s+1\in T$. 
\end{lyxlist}
\end{lem}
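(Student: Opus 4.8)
The plan is to reduce both parts to Lemma \ref{lem:fT} together with the definition of $f_S$. Writing $s = s_\mu$ with $1 \le \mu \le \gamma - 1$, balancedness ($S \cap T = \varnothing$) gives $f(s) = f_S(s) = (-1)^\mu$, so each claim amounts to computing a single value $f(s \pm 1)$ and comparing parities. Since $s$ is odd, the factor $(-1)^s = -1$ will do the sign bookkeeping throughout.

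For (i) I would argue by the length $r_\mu = s_\mu - s_{\mu-1}$ of the $\mu$th run. When $r_\mu \ge 2$ the point $s-1$ lies strictly inside the run, $s_{\mu-1} < s-1 < s_\mu$, so Lemma \ref{lem:fT} applies with index $\mu$ and gives $f(s-1) = -(-1)^{(s-1)+\mu}$, which collapses to $-f(s)$ because $s$ is odd. When $r_\mu = 1$ (which forces $\mu \ge 2$, as $s > 1$) the neighbour is itself an endpoint, $s-1 = s_{\mu-1} \in S$, and I would instead read off $f(s-1) = f_S(s_{\mu-1}) = (-1)^{\mu-1} = -f(s)$ directly from the definition of $f_S$.

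For (ii) the hypothesis $s+1 \in T$ already does most of the work: together with $S \cap T = \varnothing$ it gives $s+1 \notin S$, hence $s+1 \ne s_{\mu+1}$, and since $s_{\mu+1} \ge s_\mu + 1 = s+1$ this forces $s_\mu < s+1 < s_{\mu+1}$ (with $\mu+1 \le \gamma$, so $s_{\mu+1}$ is defined). Lemma \ref{lem:fT} applied with index $\mu+1$ then yields $f(s+1) = -(-1)^{(s+1)+(\mu+1)} = -(-1)^{s+\mu}$, which equals $f(s) = (-1)^\mu$ once more using that $s$ is odd.

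There is no deep obstacle here; the computation is a few lines of sign-chasing. The one place to be careful is the case split in (i): Lemma \ref{lem:fT} only describes $f$ in the interior of a run, where $f = f_T$, so it genuinely fails to apply when $s-1$ is itself a left endpoint, i.e. when $r_\mu = 1$. Recognising that this degenerate case needs the separate $f_S$ argument — and checking throughout that $\mu$ and $\mu+1$ stay within $\{1,\dots,\gamma\}$ so that the referenced $s_{\mu\pm1}$ exist — is the only subtlety.
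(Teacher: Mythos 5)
Your proof is correct and follows essentially the same route as the paper: both reduce to Lemma \ref{lem:fT} plus the definition of $f_{S}$, with your case split on $r_{\mu}=1$ versus $r_{\mu}\geq2$ in (i) being equivalent, under balancedness, to the paper's split on $s-1\in S$ versus $s-1\in T$. Your verification in (ii) that $s_{\mu}<s+1<s_{\mu+1}$ (so that Lemma \ref{lem:fT} applies with index $\mu+1$) is merely more explicit than the paper, which uses the same step implicitly.
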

\begin{proof}
We have $s=s_{\mu}$ for some $1\leq\mu<\gamma$. Note that then $f(s)=(-1)^{\mu}.$
If $s-1\in S$ then $f(s-1)=-f(s)$ by the definition of $f_{S}$.
If $s-1\in T$ then by Lemma \ref{lem:fT} we have $f(s-1)=-(-1)^{s-1+\mu}=-f(s)$
since $s$ is odd. Similarly, if $s+1\in T$ then we get by Lemma
\ref{lem:fT} that $f(s+1)=-(-1)^{s+1+\mu+1}=f(s)$.
\end{proof}
Note, that a similar result holds if $s\in S$ is even, but in following
we are only interested in the case where $s$ is odd.
\begin{lem}
\label{lem:Ukmod}Let $r$ be balanced and let $1\leq k<n$. Assume
that $s_{\mu-1}<k\leq s_{\mu}$ for some $0\leq\mu\leq\gamma$. Then

\[
U_{k}\equiv\mu\pmod2\quad\Leftrightarrow\quad k=2s\;\mbox{for some }\ensuremath{s\in S}.
\]
 \end{lem}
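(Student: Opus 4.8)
The plan is to evaluate $U_{k}$ modulo $2$ straight from its definition, turning it into a counting problem about the set $S$ and then exploiting that balancedness makes $S$ and $T=\{n-s:s\in S\}$ a partition of $\{1,\dots,n-1\}$.

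First I would note that each nonzero summand of $U_{k}=\sum_{j=1}^{\gamma-1}(-1)^{j}f_{T}(k-s_{j})$ equals $\pm1\equiv1\pmod 2$, and that the $j$-th summand is nonzero exactly when $k-s_{j}\in T$. Hence
\[
U_{k}\equiv\#\{\,j:1\le j\le\gamma-1,\ k-s_{j}\in T\,\}\pmod 2 .
\]
Because $r$ is balanced, $v\in T\Leftrightarrow n-v\in S$, so the condition $k-s_{j}\in T$ is equivalent to $s_{j}+(n-k)\in S$. Writing $d:=n-k$, this gives $U_{k}\equiv N_{k}\pmod 2$ with $N_{k}:=\#\{\,s\in S:s+d\in S\,\}$.

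Next I would rewrite $N_{k}$ using the partition in the form ``exactly one of $v,\,n-v$ lies in $S$''. Replacing the condition $s+d\in S$ by $n-(s+d)=k-s\notin S$ converts the count into
\[
N_{k}=\#\{\,s\in S:1\le s\le k-1\,\}-\#\{\,s:1\le s\le k-1,\ s\in S,\ k-s\in S\,\}=:\alpha-\beta .
\]
The hypothesis $s_{\mu-1}<k\le s_{\mu}$ shows that the elements of $S$ lying below $k$ are precisely $s_{1},\dots,s_{\mu-1}$, so $\alpha=\mu-1$. Moreover the involution $s\mapsto k-s$ acts on the set counted by $\beta$ and has a (necessarily unique) fixed point iff $k$ is even and $k/2\in S$; thus $\beta$ is odd iff $k=2s$ for some $s\in S$. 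Combining, $U_{k}\equiv N_{k}\equiv(\mu-1)+\beta\pmod 2$, so $U_{k}\equiv\mu\pmod 2$ iff $\beta$ is odd, i.e.\ iff $k=2s$ for some $s\in S$.

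The only delicate point is the range bookkeeping in the second display: I must check that the substitution $s+d\mapsto k-s$ keeps both arguments inside $\{1,\dots,n-1\}$ (where the partition identity applies) and that the summation limits are exactly $1\le s\le k-1$, since that is where $s+d\le n-1$, i.e.\ where $s+d$ can belong to $S$ at all. Once these ranges are fixed, the identification $\alpha=\mu-1$ and the involution argument for $\beta$ are routine; interestingly, the argument uses only balancedness and not Lemmas \ref{lem:fT}--\ref{lem:FunctionG}.
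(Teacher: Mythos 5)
Your proof is correct and takes essentially the same route as the paper: both reduce $U_{k}$ modulo $2$ to counting the indices $j<\mu$ with $k-s_{j}\in T$, use balancedness to trade membership in $T$ for non-membership in $S$ (leaving $\mu-1$ candidates), and conclude via the involution $s\mapsto k-s$, whose unique fixed point exists precisely when $k=2s$ for some $s\in S$. Your detour through $d=n-k$ and the set $\{s\in S:\,s+d\in S\}$ is only a cosmetic reformulation of the paper's counting argument, and your range bookkeeping is handled correctly.
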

\begin{proof}
Put $M_{k}$$:=\{1\leq j<\gamma:\:|f_{T}(k-s_{j})|=1\}$ and $m_{k}:=|M_{k}|$.
Then $U_{k}\equiv m_{k}\pmod2$ and $m_{k}<\mu$; furthermore, $m_{k}=\mu-1$
if and only if $k-s_{j}\in T$ (and thus $k-s_{j}\notin S$) for all
$j=1,2,\cdots,\mu-1.$ Observe that if $k-s_{j}\in S$ with $j<\mu$
then $k-s_{j}=s_{i}$ for some $1\leq i<\mu$ and hence also $k-s_{i}\in S$.
Moreover, $k-s_{j}=s_{j}$ if and only if $k=2s$ for some $s\in S$.
Therefore, we have $m_{k}\equiv\mu\pmod2$ (and thus $U_{k}\equiv m_{k}\pmod2$)
if and only if $k=2s$ for some $s\in S$. \end{proof}
\begin{lem}
\label{lem:2kinSmod4}Let $r$ be balanced and let $1\leq k<n$ with
$k\in S$. Then $k=2s$ for some $s\in S$ if and only if $\tilde{R}_{k}\equiv1\pmod4$. \end{lem}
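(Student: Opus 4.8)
The plan is to reduce everything to the criterion already established in Lemma \ref{lem:Ukmod} together with a short parity computation. Since $k\in S$ and $r$ is balanced, I would first write $k=s_{\mu}$ for the unique $\mu$ with $1\le\mu\le\gamma-1$, and use balancedness (which gives $S\cap T=\varnothing$) to conclude $k\notin T$, so that $f_{T}(k)=0$ and hence $f(k)=f_{S}(k)=(-1)^{\mu}$. By the definition (\ref{eq:rk-def}) of $\tilde{R}_{k}$ this means $\tilde{R}_{k}=(-1)^{\mu}+2U_{k}$, which isolates the only data that matter modulo $4$: the parity of $\mu$ (through the sign $(-1)^{\mu}$) and the parity of $U_{k}$.

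The core of the argument is then to show that the condition $\tilde{R}_{k}\equiv1\pmod 4$ is equivalent to $U_{k}\equiv\mu\pmod 2$. I would verify this by splitting into the two parities of $\mu$. If $\mu$ is even then $\tilde{R}_{k}=1+2U_{k}$, so $\tilde{R}_{k}\equiv1\pmod4$ exactly when $2U_{k}\equiv0\pmod4$, i.e. when $U_{k}$ is even, which is $U_{k}\equiv\mu\pmod2$. If $\mu$ is odd then $\tilde{R}_{k}=-1+2U_{k}$, so $\tilde{R}_{k}\equiv1\pmod4$ exactly when $2U_{k}\equiv2\pmod4$, i.e. when $U_{k}$ is odd, again $U_{k}\equiv\mu\pmod2$. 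In both cases the equivalence $\tilde{R}_{k}\equiv1\pmod4\Leftrightarrow U_{k}\equiv\mu\pmod2$ holds.

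Finally, I apply Lemma \ref{lem:Ukmod} with the chain $s_{\mu-1}<k=s_{\mu}\le s_{\mu}$, which shows its hypothesis is met with precisely this index $\mu$; the lemma then states that $U_{k}\equiv\mu\pmod2$ if and only if $k=2s$ for some $s\in S$. Chaining this with the equivalence of the previous paragraph yields $\tilde{R}_{k}\equiv1\pmod4\Leftrightarrow k=2s$ for some $s\in S$, which is the claim.

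I do not expect a serious obstacle here; the only points requiring care are to confirm that the $\mu$ appearing in the hypothesis of Lemma \ref{lem:Ukmod} is the same $\mu$ for which $k=s_{\mu}$ (so that the two parity statements line up), and to track the sign $(-1)^{\mu}$ contributed by $f(k)$ correctly when reducing $\tilde{R}_{k}$ modulo $4$. Everything else is a routine parity bookkeeping.
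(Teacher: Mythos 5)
Your proof is correct and follows essentially the same route as the paper's: both reduce $\tilde{R}_{k}=(-1)^{\mu}+2U_{k}$ and then invoke Lemma \ref{lem:Ukmod} with the same index $\mu$; the only cosmetic difference is that the paper compresses your two-case parity check into the single congruence $(-1)^{\mu}\equiv2\mu+1\pmod4$.
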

\begin{proof}
We have $k=s_{\mu}$ for some $1\leq\mu<\gamma$ and thus $f(k)=f_{S}(k)=(-1)^{\mu}$
and $\tilde{R}_{k}=(-1)^{\mu}+2U_{k}$. Since $(-1)^{\mu}\equiv2\mu+1\pmod4$
we have $\tilde{R}_{k}\equiv1\pmod4$ if and only if $U_{k}\equiv\mu\pmod2$.
Hence, by Lemma \ref{lem:Ukmod} it follows that $k=2s$ for some
$s\in S$ if and only if $\tilde{R}_{k}\equiv1\pmod4$.
\end{proof}
The next lemma shows that if $r$ is balanced and the first $\mu-1$
elements of $r$ are greater than $1$ then the last $s_{\mu}-\mu$
elements are less than or equal to $2$.
\begin{lem}
\label{lem:balTail}Let $r$ be balanced and let $1\leq\mu<\gamma$.
Assume that $r_{j}\geq2$ for all $j=1,2,\cdots,\mu-1$. Then $s_{\mu}-\mu<\gamma$
and $r_{\gamma+1-j}\leq2$ for all $j=1,2,\cdots,s_{\mu}-\mu$.\end{lem}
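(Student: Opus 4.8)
The plan is to reduce both assertions to a statement about how the elements of $S$ and $T$ interleave inside the initial segment $\{1,2,\ldots,s_{\mu}\}$, using that $r$ is balanced so that $S$ and $T$ partition $\{1,\ldots,n-1\}$ (see (\ref{eq:SvT})). Note first that $s_{\mu}\le n-1$ because $\mu<\gamma$.

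First I would prove $s_{\mu}-\mu<\gamma$ by counting. Since $\mu<\gamma$, the elements of $S$ lying in $\{1,\ldots,s_{\mu}\}$ are exactly $s_{1}<s_{2}<\cdots<s_{\mu}$ (as $s_{\mu}<s_{\mu+1}$), so there are precisely $\mu$ of them. Balancedness then forces the segment $\{1,\ldots,s_{\mu}\}$ to contain exactly $s_{\mu}-\mu$ elements of $T$, and since $T=\{t_{1}<\cdots<t_{\gamma-1}\}$ is increasing these are necessarily the smallest ones, namely $t_{1},\ldots,t_{s_{\mu}-\mu}$. Because $|T|=\gamma-1$, this immediately yields $s_{\mu}-\mu\le\gamma-1<\gamma$, and along the way records two facts I will need: the indices $t_{1},\ldots,t_{s_{\mu}-\mu}$ are all well-defined, and $t_{j}\le s_{\mu}$ for every $j\le s_{\mu}-\mu$.

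Next I would settle the tail bound. From (\ref{eq:t-def}) one has $t_{j}-t_{j-1}=r_{\gamma+1-j}$ (with $t_{0}=0$), so the claim $r_{\gamma+1-j}\le2$ for $1\le j\le s_{\mu}-\mu$ is exactly the claim that consecutive differences of $T$ satisfy $t_{j}-t_{j-1}\le2$ in that range. I would argue by contradiction: if $t_{j}-t_{j-1}\ge3$ for some such $j$, then $t_{j-1}+1$ and $t_{j-1}+2$ both lie strictly between the consecutive $T$-elements $t_{j-1}$ and $t_{j}$; being in $\{1,\ldots,s_{\mu}\}$ but not in $T$, both must lie in $S$, hence are consecutive partial sums $s_{i}=t_{j-1}+1$ and $s_{i+1}=t_{j-1}+2$, which forces $r_{i+1}=s_{i+1}-s_{i}=1$. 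But $s_{i+1}=t_{j-1}+2\le t_{j}-1<s_{\mu}$, so $i+1<\mu$, i.e.\ $i+1\le\mu-1$, and then the hypothesis $r_{i+1}\ge2$ contradicts $r_{i+1}=1$.

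The part requiring the most care is the boundary bookkeeping rather than any computation: verifying that the $s_{\mu}-\mu$ elements of $T$ below $s_{\mu}$ are exactly $t_{1},\ldots,t_{s_{\mu}-\mu}$, and checking the endpoints $t_{0}=0$ and $j=s_{\mu}-\mu$ so that the index $i+1$ produced in the contradiction genuinely falls in the range $1\le i+1\le\mu-1$ where the hypothesis $r_{i+1}\ge2$ applies. The degenerate case $\mu=1$ is absorbed automatically: there the inequality $s_{i+1}<s_{\mu}=s_{1}$ is already impossible since $s_{1}=\min S$, so no offending gap can occur. Everything else is a direct consequence of the partition property of a balanced $r$, and no case analysis beyond the single contradiction step is needed.
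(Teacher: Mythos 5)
Your proof is correct and takes essentially the same route as the paper: an identical counting argument (exactly $\mu$ elements of $S$ and hence $s_{\mu}-\mu$ elements of $T$ inside $\{1,\ldots,s_{\mu}\}$) for the first claim, and the same contradiction via two consecutive integers forced into $S$ producing a run of length $1$ at an index below $\mu$ for the second. The only difference is cosmetic: the paper anchors the two $S$-elements at the top of the gap ($t_{j}-2$ and $t_{j}-1$), while you anchor them at the bottom ($t_{j-1}+1$ and $t_{j-1}+2$).
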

\begin{proof}
Since $r$ is balanced we have $|\{1,2,\cdots,s_{\mu}\}\cap S|=\mu$
and $|\{1,2,\cdots,s_{\mu}\}\cap T|=s_{\mu}-\mu$. In particular,
this shows that $s_{\mu}-\mu<\gamma$ and $t_{s_{\mu}-\mu}<s_{\mu}$.
Assume that $r_{\gamma+1-j}>2$ for some $1\leq j\leq s_{\mu}-\mu$.
Then $t_{j}-t_{j-1}=r_{\gamma+1-j}>2$. Since $r$ is balanced $t_{j}-1$
and $t_{j}-2$ are both in $S$; hence $t_{j}-1=s_{i}$ for some $2\leq i\leq\gamma$
and thus $s_{i}-1=s_{i-1}$. Furthermore, since $j\leq s_{\mu}-\mu$
we have $s_{i}=t_{j}-1\leq t_{s_{\mu}-\mu}-1<s_{\mu}$, which shows
that $i<\mu$ and thus $r_{i}\geq2$. On the other hand, since $s_{i}-1=s_{i-1}$
we have $r_{i}=1$, which contradicts our assumption. This shows that
$r_{\gamma+1-j}\leq2$ for all $j=1,2,\cdots,s_{\mu}-\mu$.
\end{proof}

\subsection{Barker Sequences}

Let $a$ be a Barker sequence of odd length $n=2m-1$ and let $1\leq$$k<n.$
Then by (\ref{eq:CkBarker}) and Theorem \ref{thm:main_aperiodic-1}
we obtain
\begin{eqnarray}
R_{1} & = & \begin{cases}
-m & \mbox{if }m\mbox{ odd}\\
1-m & \mbox{if }m\mbox{ even}
\end{cases}\label{eq:RkBarkerNeu}\\
R_{k} & = & (-1)^{k+m+1}\quad\mbox{for }k=2,3,\cdots,n-1.\nonumber 
\end{eqnarray}
The next result provides the basis for the analysis of the run vector
of Barker sequences of odd length. 
\begin{prop}
\label{prop:Barker-Balanced}Let $a$ be a Barker sequence of odd
length $n$. Then $r$ is balanced, $\gamma=\frac{n+1}{2}$ and 
\begin{eqnarray}
\tilde{R}_{k} & = & (-1)^{k}\quad\mbox{for all}\mbox{ }1\leq k\leq n-2\label{eq:Rk=00003D1^k}\\
\tilde{R}_{n-1} & = & \begin{cases}
\gamma & \mbox{if }\gamma\mbox{ odd}\\
1-\gamma & \mbox{if }\gamma\mbox{ even}.
\end{cases}\nonumber 
\end{eqnarray}
\end{prop}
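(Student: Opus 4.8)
The plan is to establish the proposition in three logically separate pieces, relying mainly on the relation $R_{k}=(-1)^{\gamma}\tilde{R}_{n-k}$ from~(\ref{eq:rk-def}) together with the known run vector of a Barker sequence recorded in~(\ref{eq:RkBarkerNeu}). First I would show that $r$ is balanced. By the computation in~(\ref{eq:RkBarkerNeu}), for $k=2,\dots,n-1$ we have $R_{k}=(-1)^{k+m+1}$, which is odd, while $R_{1}$ equals $-m$ or $1-m$; since $n=2m-1$ is odd, $m$ and $1-m$ always have opposite parities, so $R_{1}$ is odd as well. Thus every $R_{k}$, and hence (by the defining relation, which merely reindexes and possibly negates) every $\tilde{R}_{k}$, is odd for $1\leq k<n$. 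Lemma~\ref{lem:balanced_RkOdd} then immediately gives that $r$ is balanced, and the remark following~(\ref{eq:SvT}) yields $\gamma=\frac{n+1}{2}=m$.

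Next I would compute $\tilde{R}_{k}$ explicitly. The cleanest route is to invert the relation $R_{k}=(-1)^{\gamma}\tilde{R}_{n-k}$ to get $\tilde{R}_{k}=(-1)^{\gamma}R_{n-k}$. Since $\gamma=m$, this reads $\tilde{R}_{k}=(-1)^{m}R_{n-k}$. For the range $1\leq k\leq n-2$ the index $n-k$ runs over $2,\dots,n-1$, so I may substitute $R_{n-k}=(-1)^{(n-k)+m+1}$ from~(\ref{eq:RkBarkerNeu}). Combining the two sign factors and using $n=2m-1$, I expect the exponent to collapse, after reduction mod $2$, to parity $k$, yielding $\tilde{R}_{k}=(-1)^{k}$; this is a short sign bookkeeping that I would carry out carefully but which presents no structural difficulty.

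Finally the boundary value $\tilde{R}_{n-1}$ requires the other case of~(\ref{eq:RkBarkerNeu}), namely the exceptional value $R_{1}$. Here $\tilde{R}_{n-1}=(-1)^{m}R_{1}$, and substituting $R_{1}=-m$ when $m$ is odd and $R_{1}=1-m$ when $m$ is even gives $\tilde{R}_{n-1}=m=\gamma$ for odd $\gamma$ and $\tilde{R}_{n-1}=1-m=1-\gamma$ for even $\gamma$, exactly the stated piecewise formula (recalling $\gamma=m$). The only genuine point of care, and the step I would treat as the main obstacle, is the sign arithmetic in the middle range: one must track the factor $(-1)^{\gamma}$ coming from the definition of $R_{k}$ together with the $(-1)^{k+m+1}$ in the Barker run vector and confirm that the substitution $\gamma=m$, $n=2m-1$ makes everything consistent with $\tilde{R}_{k}=(-1)^{k}$ rather than $-(-1)^{k}$. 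I would double-check this on a small explicit Barker sequence (for instance length $7$ or $11$) to be certain the orientation of the reindexing in~(\ref{eq:rk-def}) has been applied in the correct direction.
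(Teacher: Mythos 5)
Your proposal is correct and follows essentially the same route as the paper: oddness of every $R_{k}$ from~(\ref{eq:RkBarkerNeu}) plus Lemma~\ref{lem:balanced_RkOdd} gives that $r$ is balanced with $\gamma=m$, and the explicit values then come from inverting $R_{k}=(-1)^{\gamma}\tilde{R}_{n-k}$, a sign computation the paper compresses into ``follows directly from~(\ref{eq:RkBarkerNeu})'' and you carry out in full. One small wording fix: oddness of $R_{1}$ does not follow from $m$ and $1-m$ having opposite parities, but simply from the fact that~(\ref{eq:RkBarkerNeu}) assigns $-m$ exactly when $m$ is odd and $1-m$ exactly when $m$ is even, so the selected value is odd in both cases.
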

\begin{proof}
Let $n=2m-1$ and $1\leq k<n.$ By (\ref{eq:RkBarkerNeu}) $R_{k}$
and thus also $\tilde{R}_{k}$ is odd; hence by Lemma \ref{lem:balanced_RkOdd}
$r$ is balanced . Since $r$ is balanced, we have $\gamma=\frac{n+1}{2}$;
hence $\gamma=m$ and (\ref{eq:Rk=00003D1^k}) follows directly from
(\ref{eq:RkBarkerNeu}). 
\end{proof}
Note that by Proposition \ref{prop:-balanced-is-skew-symmetric} a
Barker sequence of odd length is skew-symmetric. Note further that
this result can be generalized. If $n$ is odd and $C_{j}=0$ for
all odd $j$ with $1\leq j<n$ then from (\ref{eq:Ckperiodic}), (\ref{eq:Ckmod4})
and Theorem \ref{thm:main_aperiodic-1} it follows that $\tilde{R}_{k}$
is odd for all $1\leq k<n$ and thus $r$ is balanced and $a$ skew-symmetric.

\specialsection{The Main Result}

In this section we will prove that there exists no Barker sequence
of odd length $n>13$. For the following we always set $p:=r_{1}$,
and if $r$ is balanced with $p>1$ we always set
\[
\nu:=min\{1\leq j<\gamma:\; r_{j+1}\bmod p\neq0\}\quad\mbox{and \ensuremath{\quad q:=r_{\nu+1}}}.
\]
Let $r$ be balanced with $p>1$. Then $r_{\gamma}=1$; hence $\nu$
is well defined and $1\leq\nu<n$. In the following we will consider
two different cases, which we can treat quite similarly by putting
$\alpha:=1$ if $q\bmod p=1$ and $\alpha:=0$ otherwise. 

Let us first sketch the idea of the proof. Suppose that $a$ is a
Barker sequence of odd length $n$. As we will see we may assume without
loss of generality that $a_{1}=a_{2}$; thus we have $p>1.$ The key
observation is that on the one hand by (\ref{eq:Rk=00003D1^k}) we
have $\tilde{R}_{k}+\tilde{R}_{k-1}=0$ for all$\mbox{ }k=2,3,\cdots,n-2$.
On the other hand, we show in Lemma \ref{lem:RkplusRk} that $|\tilde{R}_{k_{0}}+\tilde{R}_{k_{0}-1}|\geq2$
if $k_{0}:=p+s_{\nu+1}+\alpha<n$. Hence, we must have $k_{0}\geq n-1$,
which is, as we will see, only possible if $p=3$ or $p=5$ and if
in addition $\nu\leq2$ and $q\leq2$. However, this implies that
$n$ is small. 
\begin{lem}
\label{lem:sj_mod_q}Let $r$ be balanced with $p>1$; then
\begin{lyxlist}{00.00.0000}
\item [{(i)}] $\tilde{R}_{p}=-1$ 
\item [{(ii)}] \textup{$s_{j}\bmod p=0$ }for all \textup{$j=1,2,\cdots,\nu$}
\item [{(iii)}] \textup{$s_{\nu+1}\bmod p\neq0$ }and \textup{$q\bmod p\neq0$ }
\item [{(iv)}] \textup{$s_{\nu+1}\leq\gamma+1$ }if $p\geq3$
\end{lyxlist}
\end{lem}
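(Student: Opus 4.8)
The plan is to prove the four parts of Lemma \ref{lem:sj_mod_q} largely by working directly with the definitions of $\nu$, $q$, $p$, and the indices $s_j$, exploiting that $r$ is balanced. For part (i), I would use Lemma \ref{lem:fT} together with the definition of $\tilde{R}_k$ in (\ref{eq:rk-def}). Since $p=r_1$ and $r$ is balanced, the index $p$ sits strictly between $s_0=0$ and $s_1=r_1=p$ only when $p>1$, so $p<s_1$ would fail; instead $p=s_1\in S$. Wait — more carefully, $s_1 = r_1 = p$, so $p=s_1\in S$ rather than $T$. I would therefore compute $f(p)=f_S(s_1)=(-1)^1=-1$ and observe that $U_p=0$ because $U_k=0$ for $1\le k\le s_1$ (noted right after (\ref{eq:def-Uk})); hence $\tilde{R}_p=f(p)+2U_p=-1$.

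For part (ii), I would argue by induction on $j$. The defining property of $\nu$ is that $\nu$ is the \emph{least} index with $r_{\nu+1}\bmod p\neq 0$; equivalently $r_{j+1}\bmod p=0$ for all $j<\nu$, and also $r_1=p$ so $p\mid r_1$. Thus $p$ divides each of $r_1,r_2,\dots,r_\nu$, and since $s_j=r_1+\cdots+r_j$ by (\ref{eq:s-def}), summing gives $p\mid s_j$ for all $j=1,\dots,\nu$. Part (iii) is essentially the contrapositive bookkeeping: $s_{\nu+1}=s_\nu+r_{\nu+1}$ with $p\mid s_\nu$ from (ii) but $p\nmid r_{\nu+1}=q$ by the very definition of $\nu$, so $s_{\nu+1}\bmod p\neq0$ and $q\bmod p\neq0$ follow immediately.

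The substantive part is (iv), the bound $s_{\nu+1}\le\gamma+1$ when $p\ge3$. Here I would invoke Lemma \ref{lem:balTail}, which says that if $r_j\ge2$ for all $j=1,\dots,\mu-1$ then $s_\mu-\mu<\gamma$ and the last $s_\mu-\mu$ entries of $r$ are at most $2$. The idea is to apply this with $\mu=\nu+1$: since $p\ge3$ and $p\mid r_j$ for $j\le\nu$, each of $r_1,\dots,r_\nu\ge3\ge2$, so the hypothesis of Lemma \ref{lem:balTail} holds and gives $s_{\nu+1}-(\nu+1)<\gamma$. I would then need to sharpen this into $s_{\nu+1}\le\gamma+1$, which I expect requires combining the tail bound $r_{\gamma+1-j}\le2$ with the counting identity $\sum_{j=1}^\gamma r_j=n=2\gamma-1$ (recall $n$ odd and $\gamma=\frac{n+1}{2}$ for balanced $r$). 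The main obstacle is precisely this final counting step: I would split the sum into the first $\nu$ large blocks (each $\ge3$, hence contributing at least $3\nu$) and the remaining $\gamma-\nu$ blocks (each $\ge1$), and separately use that the last $s_{\nu+1}-(\nu+1)$ runs are $\le2$ to squeeze $s_{\nu+1}$ from above; reconciling these inequalities against $n=2\gamma-1$ should force $s_{\nu+1}\le\gamma+1$. Getting the boundary constants exactly right — distinguishing $\le\gamma+1$ from the weaker $<\gamma+(\nu+1)$ that Lemma \ref{lem:balTail} gives directly — is where the care is needed, and I suspect the hypothesis $p\ge3$ (rather than merely $p>1$) is exactly what makes the large blocks big enough for the count to close.
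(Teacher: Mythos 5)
Parts (i)--(iii) of your proposal are correct and essentially identical to the paper's proof: for (i) the paper likewise notes $U_{p}=0$ and $f(p)=f_{S}(p)=-1$ (your initial detour about $p$ lying strictly between $s_{0}$ and $s_{1}$ is harmless since you correct it), and (ii), (iii) are exactly the ``directly from the definition of $\nu$'' bookkeeping. The genuine gap is in (iv), which you yourself flag as unfinished. Your proposed finishing move --- reconciling the lower bounds $r_{j}\geq3$ ($j\leq\nu$) and $r_{j}\geq1$ ($j>\nu+1$) against the identity $\sum_{j=1}^{\gamma}r_{j}=n=2\gamma-1$ --- cannot close the argument: to bound $s_{\nu+1}$ from above you need a \emph{lower} bound on the tail sum $\sum_{j=\nu+2}^{\gamma}r_{j}$, and the only one available is $r_{j}\geq1$, giving $s_{\nu+1}\leq(2\gamma-1)-(\gamma-\nu-1)=\gamma+\nu$. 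The tail conclusion $r_{\gamma+1-j}\leq2$ of Lemma \ref{lem:balTail} is an \emph{upper} bound and contributes nothing in that direction. So for $\nu\geq2$ your count stalls at $\gamma+\nu$, strictly weaker than $\gamma+1$; and (iv) is needed for arbitrary $\nu$, since Theorem \ref{thm:AllBarker} uses it precisely to derive $\nu\leq2$ in the first place, so you cannot assume $\nu$ small.

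The missing idea is that the tail bound should be used as an index pigeonhole, not as a summand estimate. With $\mu:=\nu+1$, Lemma \ref{lem:balTail} gives in particular $r_{\gamma+1-(s_{\mu}-\mu)}\leq2$; but $r_{j}\geq p\geq3$ for all $j\leq\mu-1=\nu$, so the index $\gamma+1-(s_{\mu}-\mu)$ cannot lie in $\{1,\dots,\mu-1\}$. Hence $\gamma+1-(s_{\mu}-\mu)\geq\mu$, which rearranges exactly to $s_{\nu+1}=s_{\mu}\leq\gamma+1$ --- the paper's one-line finish, needing no further appeal to $n=2\gamma-1$. (In words: if $s_{\nu+1}>\gamma+1$, the run of short runs at the end would reach back into the initial block of runs of length $\geq3$, a contradiction.) One small point you share with the paper's terse wording: Lemma \ref{lem:balTail} requires $\mu<\gamma$, which should be checked; it holds because $\nu+1=\gamma$ would force $r_{j}\geq3$ for all $j<\gamma$ and hence $n\geq3(\gamma-1)+1>2\gamma-1$ for $\gamma\geq2$, contradicting $n=2\gamma-1$.
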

\begin{proof}
\emph{(i)}: From the definition of $U_{p}$ and $f$ it follows directly
that $U_{p}=0$ and $f(p)=f_{S}(p)=-1$; hence, $\tilde{R}_{p}=-1$. 

\emph{(ii)} and \emph{(iii)}: Both statements follow directly from
the definition of $\nu$.

\emph{(iv)}: Let $p\geq3$ and put $\mu:=\nu+1;$ then $\mu<\gamma$
and $r_{j}\geq p\geq3$ for all $j=1,2,\cdots,\mu-1$ and $\mu<\gamma$.
By Lemma \ref{lem:balTail} we have $s_{\mu}-\mu<\gamma$ and $r_{\gamma+1-(s_{\mu}-\mu)}\leq2$.
Hence, $\gamma+1-(s_{\mu}-\mu)\geq\mu$ and thus $s_{\mu}\leq\gamma+1$.
\end{proof}
The next lemma is a key result; it provides the main argument for
showing that a Barker sequence of odd length must be short.
\begin{lem}
\label{lem:RkplusRk}Let $r$ be balanced with $p\geq3$ and let $k:=p+s_{\nu+1}+\alpha$.
Assume that
\begin{lyxlist}{00.00.0000}
\item [{(i)}] $p$ and \textup{$s_{\nu+1}$} are both odd
\item [{(ii)}] \textup{$s_{\nu+1}+1\in T\quad$ }if \textup{$\alpha=1$.}
\end{lyxlist}
Then
\begin{equation}
|\tilde{R}_{k}+\tilde{R}_{k-1}|\geq2\quad\mbox{if }k<n\label{eq:RkRkMinus1}
\end{equation}
 \end{lem}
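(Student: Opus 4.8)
The plan is to reduce \eqref{eq:RkRkMinus1} to a nonvanishing statement and then evaluate the two run-vector entries through their defining formula $\tilde R_m=f(m)+2U_m$. Since $r$ is balanced, Lemma~\ref{lem:balanced_RkOdd} shows that $\tilde R_k$ and $\tilde R_{k-1}$ are both odd, so their sum is even; hence \eqref{eq:RkRkMinus1} is equivalent to $\tilde R_k+\tilde R_{k-1}\neq 0$, i.e.\ to $\tilde R_k\neq-\tilde R_{k-1}$. I would first record the elementary observation that, because $r_1=p=s_1$, the integers $1,2,\dots,p-1$ all lie below $s_1$ and are therefore the $p-1$ smallest elements of $T$ (note $\gamma\ge p$); thus $f_T(i)=(-1)^i$ for $1\le i\le p-1$, a fact that later pins down several terms of the sums $U_k$.

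The pivot is the even point $P:=p+s_{\nu+1}$. By hypothesis (i) both $p$ and $s_{\nu+1}$ are odd, so $P$ is even and is exactly one of $k,k-1$, the other being the odd neighbour $O$ (with $O=P-1$ if $\alpha=0$ and $O=P+1$ if $\alpha=1$). I would then show that $P\neq 2s$ for every $s\in S$: since $s_1=p<\tfrac12 P<s_{\nu+1}$, any representation $P=2s_j$ would force $2\le j\le\nu$, whence $s_j\equiv 0\pmod p$ by Lemma~\ref{lem:sj_mod_q}(ii); but $P\equiv q\pmod p$ with $p$ odd gives $\tfrac12 P\equiv q\cdot 2^{-1}\not\equiv 0\pmod p$ by Lemma~\ref{lem:sj_mod_q}(iii), a contradiction. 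Feeding $P\neq 2s$ into Lemma~\ref{lem:Ukmod} fixes the parity of $U_P$, and since the odd point $O$ is never of the form $2s$ the same lemma fixes the parity of $U_O$. Writing $\mu_P,\mu_O$ for the indices with $s_{\mu_P-1}<P\le s_{\mu_P}$ and $s_{\mu_O-1}<O\le s_{\mu_O}$, both parities come out as $\mu_\bullet+1$, so that $(-1)^{U_k+U_{k-1}}=(-1)^{\mu_P+\mu_O}$.

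Modulo $4$ we have $\tilde R_m\equiv f(m)+2U_m$, and $2U_m\bmod 4$ depends only on $U_m\bmod 2$; hence $\tilde R_k\equiv\tilde R_{k-1}\pmod 4$ is equivalent to the single criterion $f(k)f(k-1)=(-1)^{U_k+U_{k-1}}=(-1)^{\mu_P+\mu_O}$. The values $f(k),f(k-1)$ I would read off from the location of $k,k-1$ in the partition $S\cup T$ using Lemma~\ref{lem:fT} for points of $T$ and Lemma~\ref{lem:FunctionG} for the odd points of $S$ involved, organising the verification by $\alpha$ and by whether $P\in S$ or $P\in T$ (hypothesis (ii) being invoked when $\alpha=1$ to control the run immediately after $s_{\nu+1}$). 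In every configuration but one the criterion holds, so $\tilde R_k\equiv\tilde R_{k-1}\pmod4$; the even sum is then $\equiv2\pmod4$, in particular nonzero, and \eqref{eq:RkRkMinus1} follows.

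The residual configuration — the part I expect to be the real obstacle — is when $P\in T$ sits adjacent, on the side of $O$, to an odd element of $S$ (this genuinely occurs already for small balanced sequences). There $f(P)=f(O)$, the mod‑$4$ criterion fails, and $\tilde R_k+\tilde R_{k-1}=2f(P)+2(U_P+U_O)$ is only known to be a multiple of $4$; one must actually exclude the value $0$, i.e.\ show $U_P+U_O\neq -f(P)$, using the true magnitude of these signed sums rather than their parity alone. Here I would expand $U_P$ and $U_O$ via the identity $f_T(i)=(-1)^i$ on $1\le i\le p-1$ and use Lemma~\ref{lem:balTail} (the tail runs are $\le 2$ because $r_1,\dots,r_\nu\ge p$) together with Lemma~\ref{lem:sj_mod_q}(iv) to restrict which indices $j$ can contribute to the sums; pinning the exact value of $U_P+U_O$ and checking it avoids $\pm1$ is the crux of the argument.
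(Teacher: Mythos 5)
Your proposal contains a genuine gap, and it is exactly where you say it is. The reduction to $\tilde R_k+\tilde R_{k-1}\neq 0$ (via Lemma \ref{lem:balanced_RkOdd}), the observation that $f_T(i)=(-1)^i$ for $1\leq i\leq p-1$, the proof that $P=p+s_{\nu+1}$ is not of the form $2s$ with $s\in S$ (via Lemma \ref{lem:sj_mod_q} (\emph{ii}), (\emph{iii})), and the resulting parity identity $(-1)^{U_k+U_{k-1}}=(-1)^{\mu_P+\mu_O}$ from Lemma \ref{lem:Ukmod} are all correct. But this machinery can only ever prove $\tilde R_k+\tilde R_{k-1}\equiv 2\pmod 4$, and that congruence is simply false in general: the paper's proof produces the exact identity $\tilde R_k+\tilde R_{k-1}=f(k)+f_T(k-1)-f_S(k-1)-4(-1)^{\nu+\alpha}$, whose value can be $\pm 4$ (e.g.\ when $k-1\in T$ and $f(k)+f_T(k-1)=0$). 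So the ``residual configuration'' you flag is not a fringe case to be checked at the end; it is the case in which any purely mod-$4$ argument is structurally incapable of excluding the value $0$, and for it you offer only a declared intention (``pinning the exact value of $U_P+U_O$ \ldots is the crux'') rather than an argument. In addition, the intermediate claim ``in every configuration but one the criterion holds'' is asserted, not proved, and that case analysis over $\alpha$ and $P\in S$ versus $P\in T$ is itself a substantial part of the work.

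The paper avoids this dead end by never passing to parities: it pairs the two defining sums as $\tilde R_k+\tilde R_{k-1}=f(k)+f(k-1)+2\sum_{j=1}^{\gamma-1}(-1)^{j}g(k_j)$ with $g(i):=f_T(i)+f_T(i-1)$ and $k_j:=k-s_j$, then evaluates every term exactly: $g(k_1)=(-1)^{\nu+\alpha}$ and $g(k_{\nu+1})=(-1)^{\alpha}$ (using Lemma \ref{lem:FunctionG} and hypotheses (\emph{i}), (\emph{ii})); $g(k_j)=0$ for $2\leq j\leq\nu$ because there both $k_j$ and $k_j-1$ avoid multiples of $p$ and hence lie in $T$ consecutively; and $\sum_{j\geq\nu+2}(-1)^jg(k_j)=-f_S(k-1)$ because $g$ vanishes below $p$ except at $1$. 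This yields the dominant term $-4(-1)^{\nu+\alpha}$ against which the remaining terms, of total magnitude at most $2$, cannot cancel. If you want to rescue your plan, you would have to replicate essentially this computation inside your residual case to pin down $U_P+U_O$ exactly, at which point the mod-$4$ superstructure buys you nothing; the honest conclusion is that the exact evaluation is the proof, and the parity argument is a partial shortcut that cannot be completed on its own terms.
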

\begin{proof}
Put $k_{j}:=k-s_{j}$ for $j=1,2,\cdots,\gamma-1$ and $g(i):=f_{T}(i)+f_{T}(i-1)$
for $i\mathbb{\in Z}$. Then by (\ref{eq:rk-def}) we have $\tilde{R}_{k}+\tilde{R}_{k-1}=f(k)+f(k-1)+2\sum_{j=1}^{\gamma-1}(-1)^{j}g(k_{j})$.

We claim that $g(k_{1})=(-1)^{\nu+\alpha}$ and $g(k_{\nu+1})=(-1)^{\alpha}.$
Note that $k_{1}=s_{\nu+1}+\alpha$ and $k_{\nu+1}=p+\alpha$. If
$\alpha=0$ then $k_{1}\in S$, $k_{1}$ is odd by (\emph{i}) and
$k_{1}-1\in T$ since $q>1$; therefore, by Lemma \ref{lem:FunctionG}
we obtain $g(k_{1})=f_{T}(k_{1})+f_{T}(k_{1}-1)=f_{T}(k_{1}-1)=f(k_{1}-1)=-f(k_{1})=(-1)^{\nu}$
and $g(k_{\nu+1})=f_{T}(p)+f_{T}(p-1)=1$ since $p$ is odd. 

If $\alpha\neq0$ then by (\emph{ii}) $k_{1}\in T$ and $k_{1}-1\in S$
and by Lemma \ref{lem:FunctionG} we get $g(k_{1})=f_{T}(k_{1})=f(s_{\nu+1}+1)=f(s_{\nu+1})=(-1)^{\nu+1}$.
Observe that $p+1\in T$ since $p+1\in S$ would imply that $s_{\nu+1}=s_{2}$
is even. Hence, $g(k_{\nu+1})=f_{T}(p+1)+f_{T}(p)=-1$ for $\alpha\neq0$. 

Next we want to show that $g(k_{j})=0$ for $2\leq j\leq\nu$. So
suppose that $\nu\geq2$ and let $2\leq j\leq\nu$. Then $r_{2}\geq p$
and thus $k_{j}<s_{\nu+1}$. Furthermore, note that $k_{j}=p+(s_{\nu}-s_{j})+q+\alpha$
and that $p$  divides neither $q+\alpha$ nor $q+\alpha-1.$ Therefore,
$p$ divides neither $k_{j}$ nor $k_{j}-1$; since $k_{j}<s_{\nu+1}$this
shows that $k_{j}$ and $k_{j}-1$ are both in $T$ and thus $g(k_{j})=0$. 

Finally, suppose that $j\geq\nu+2$. Then by (\emph{ii}) we have $k_{j}<p$.
Note that $g(1)=-1$ and $g(i)=0$ if $i<p$ and $i\neq1$. Moreover,
we have $k_{j}=1$ if and only if $f_{S}(k-1)\neq0$, and if $k_{j}=1$
then $g(k_{j})=-1$ and $f_{S}(k-1)=f(s_{j})=(-1)^{j}$. Therefore,
we get $\sum_{j=\nu+2}^{\gamma-1}(-1)^{j}g(k_{j})=-f_{S}(k-1)$. 

Altogether, we get $\sum_{j=1}^{\gamma-1}(-1)^{j}g(k_{j})=-(-1)^{\nu+\alpha}+(-1)^{\nu+1}(-1)^{\alpha}-f_{S}(k-1)=-2(-1)^{\nu+\alpha}-f_{S}(k-1)$
and thus $\tilde{R}_{k}+\tilde{R}_{k-1}=f(k)+f_{T}(k-1)-f_{S}(k-1)-4(-1)^{\nu+\alpha}$.
Since $|f(k)|=1$ and $|f_{T}(k-1)-f_{S}(k-1)|=1$ we get $|\tilde{R}_{k}+\tilde{R}_{k-1}|\geq2$.
\end{proof}
Note that if $r$ is balanced then $2|\tilde{R}_{k}+\tilde{R}_{k-1}|=|C_{n-k+2}-C_{n-k}|$
if $k$ is odd and $2|\tilde{R}_{k}+\tilde{R}_{k-1}|=|C_{n-k+1}-C_{n-k-1}|$
if $k$ is even. Thus, $|\tilde{R}_{k}+\tilde{R}_{k-1}|\geq2$ implies
that $|C_{j}|\geq3$ for some $j\in\{n-k-1,n-k,n-k+1,n-k+2\}$. 

Next we want to apply the derived results to Barker sequences of odd
length. Note that if $a$ is a Barker sequence of odd length $n$
then Proposition \ref{prop:Barker-Balanced} shows that $r$ is balanced
and $\tilde{R}_{k}=(-1)^{k}$ for all $1\leq k\leq n-2$. 
\begin{lem}
\label{lem:p_odd}Let $a$ be a Barker sequence of odd length $n$
with $p>1$; then
\begin{lyxlist}{00.00.0000}
\item [{(i)}] $p$ is odd (and thus $p\geq3)$ if \textup{$n>3$}
\item [{(ii)}] $s_{\nu+1}$ is odd if \textup{$n>5$ }
\item [{(iii)}] \textup{$s_{\nu+1}+1\in T$ }if \textup{$n>5$ }and \textup{$\alpha=1$}
\item [{(iv)}] \textup{$n\leq p+s_{\nu+1}+\alpha+1$ if }$n>5$.
\end{lyxlist}
\end{lem}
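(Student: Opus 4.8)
The plan is to treat the four parts as a chain: \emph{(i)} fixes the parity of $p$, \emph{(ii)} and \emph{(iii)} manufacture precisely the hypotheses of Lemma~\ref{lem:RkplusRk}, and \emph{(iv)} then reads off the length bound by playing that lemma against the Barker identity $\tilde{R}_k=(-1)^k$ supplied by Proposition~\ref{prop:Barker-Balanced} (valid for $1\le k\le n-2$). Throughout I use that $r$ is balanced with $\gamma=(n+1)/2$.

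For \emph{(i)}, I would first check that $p\le n-2$: for $n>3$ we have $\gamma\ge 3$, so $s_1<s_2\le s_{\gamma-1}\le n-1$ gives $p=s_1\le n-2$. Then the Barker value $\tilde{R}_p=(-1)^p$ must agree with $\tilde{R}_p=-1$ from Lemma~\ref{lem:sj_mod_q}(i), forcing $p$ odd, hence $p\ge 3$. For \emph{(ii)} and \emph{(iii)}, assume $n>5$ (so $n\ge 7$ and $p\ge 3$) and invoke Lemma~\ref{lem:sj_mod_q}(iv) for the bound $s_{\nu+1}\le\gamma+1\le n-2$. In \emph{(ii)}: if $\nu+1=\gamma$ then $s_{\nu+1}=n$ is odd; otherwise $s_{\nu+1}\in S$, and were it even we would have $\tilde{R}_{s_{\nu+1}}=(-1)^{s_{\nu+1}}=1\equiv 1\pmod 4$, so Lemma~\ref{lem:2kinSmod4} yields $s_{\nu+1}/2\in S$. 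Since $s_{\nu+1}/2<s_{\nu+1}$ this equals some $s_i$ with $i\le\nu$, whence $p\mid s_i$ by Lemma~\ref{lem:sj_mod_q}(ii) and thus $p\mid s_{\nu+1}$, contradicting Lemma~\ref{lem:sj_mod_q}(iii). So $s_{\nu+1}$ is odd.

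For \emph{(iii)}, where $\alpha=1$ gives $s_{\nu+1}\equiv q\equiv 1\pmod p$, I argue by contradiction: since $s_{\nu+1}+1\le n-1$ it lies in $S\cup T$, so suppose $s_{\nu+1}+1\in S$. By \emph{(ii)} it is even, and if $s_{\nu+1}+1\le n-2$ then $\tilde{R}_{s_{\nu+1}+1}\equiv 1\pmod 4$ and Lemma~\ref{lem:2kinSmod4} gives $(s_{\nu+1}+1)/2=s_i\in S$ with $i\le\nu$; hence $p\mid s_{\nu+1}+1$, i.e. $s_{\nu+1}\equiv -1\pmod p$, which together with $s_{\nu+1}\equiv 1\pmod p$ forces $p\mid 2$, impossible for $p\ge 3$. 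The estimate $s_{\nu+1}+1\le n-2$ can fail only when $n=7$ (forcing $s_{\nu+1}=\gamma+1=5$), and a direct inspection of the balanced encodings of length $7$ with $p>1$ shows that $\alpha=1$ never occurs there, so the claim holds vacuously in that case; thus $s_{\nu+1}+1\in T$.

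Finally \emph{(iv)}: set $k_0:=p+s_{\nu+1}+\alpha$. Parts \emph{(i)}--\emph{(iii)} are exactly the hypotheses of Lemma~\ref{lem:RkplusRk}. If $k_0\le n-2$, then both $k_0$ and $k_0-1$ lie in $\{1,\dots,n-2\}$, so $\tilde{R}_{k_0}+\tilde{R}_{k_0-1}=(-1)^{k_0}+(-1)^{k_0-1}=0$, contradicting $|\tilde{R}_{k_0}+\tilde{R}_{k_0-1}|\ge 2$ from the lemma; hence $k_0\ge n-1$, i.e. $n\le p+s_{\nu+1}+\alpha+1$. I expect the main obstacle to be the parity bookkeeping in \emph{(ii)} and \emph{(iii)}: one must translate the Barker value $\tilde{R}_k=(-1)^k$ through Lemma~\ref{lem:2kinSmod4} into a divisibility-by-$p$ statement and collide it with $p\mid s_j$ for $j\le\nu$, all while keeping the index within the range $k\le n-2$ where the Barker identity is available, which is what isolates the boundary check at $n=7$.
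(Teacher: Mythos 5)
Your proposal follows the paper's proof essentially step for step: part \emph{(i)} via $\tilde{R}_p=-1$ against the Barker value $(-1)^p$, parts \emph{(ii)} and \emph{(iii)} via Lemma~\ref{lem:2kinSmod4} colliding a divisibility-by-$p$ statement with Lemma~\ref{lem:sj_mod_q}(ii)--(iii), and part \emph{(iv)} by playing Lemma~\ref{lem:RkplusRk} against $\tilde{R}_k+\tilde{R}_{k-1}=0$. The one place you genuinely diverge is also the one place where your argument breaks: the boundary check in \emph{(iii)}. You correctly noticed that the Barker identity $\tilde{R}_k=(-1)^k$ is only guaranteed for $k\le n-2$, so applying Lemma~\ref{lem:2kinSmod4} at $k=s_{\nu+1}+1$ needs $s_{\nu+1}\le n-3$, which can fail when $n=7$ (the paper glosses over this point entirely). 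But your disposal of that case is false as stated: it is not true that $\alpha=1$ never occurs among balanced encodings of length $7$ with $p>1$. The encoding $r=(4,1,1,1)$ is balanced ($S=\{4,5,6\}$, $T=\{1,2,3\}$) with $p=4$, $\nu=1$, $q=1$, hence $\alpha=1$ and $s_{\nu+1}=5=\gamma+1$ exactly in the problematic range; worse, for this encoding $s_{\nu+1}+1=6\in S$, so conclusion \emph{(iii)} itself fails at the level of balanced encodings. (Also $(2,1,3,1)$ and $(2,2,2,1)$ are balanced with $\alpha=1$.) Balancedness alone cannot carry this case; the Barker hypothesis is essential.

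The repair is short and uses something you already proved: by part \emph{(i)}, a Barker sequence of length $7$ with $p>1$ has $p$ odd, and the only balanced encoding of length $7$ with odd $p>1$ is $(3,2,1,1)$, which has $q=2$ and hence $\alpha=0$; so \emph{(iii)} is indeed vacuous at $n=7$, but the inspection must be restricted to odd $p$ (equivalently, to Barker sequences), not to all balanced encodings. Alternatively, no case split is needed at all: when $n=7$ one has $\gamma=4$, so Proposition~\ref{prop:Barker-Balanced} gives $\tilde{R}_{n-1}=1-\gamma=-3\equiv 1\pmod 4$, which is exactly the hypothesis Lemma~\ref{lem:2kinSmod4} requires, and the main argument goes through verbatim at $k=n-1$. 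With either fix your proof is complete and coincides with the paper's.
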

\begin{proof}
(\emph{i}): Suppose $n>3$; then $\gamma\geq3$ and thus $p<n-1$
and $\tilde{R}_{p}=(-1)^{p}$. Hence, by Lemma \ref{lem:sj_mod_q}
(\emph{i}) $p$ is odd and hence $p\geq3$.

(\emph{ii}): Suppose that $n>5$. Then by Lemma \ref{lem:sj_mod_q}
(\emph{iv}) we have $s_{\nu+1}\leq\gamma+1\leq\frac{n+3}{2}\leq n-2$.
Assume that $s_{\nu+1}$ is even. Then $\tilde{R}_{s_{\nu+1}}=1$
and by Lemma \ref{lem:2kinSmod4} we would get $s_{\nu+1}=2s_{j}$
for some $1\leq j\leq\nu$, which is not possible by Lemma \ref{lem:sj_mod_q}
(\emph{ii}) and (\emph{iii}). Hence, $s_{\nu+1}$ is odd. 

(\emph{iii}): Suppose that $n>5$ and $\alpha=1$. By (\emph{ii})
$s_{\nu+1}+1$ is even. Assume that $s_{\nu+1}+1\in S$; then by Lemma
\ref{lem:2kinSmod4} we have $s_{\nu+1}+1=2s_{j}$ for some $1\leq j\leq\nu$
and thus $q\bmod p=s_{\nu+1}\bmod p=p-1$, which is not possible since
$p\geq3$ and $q\bmod p=1$. Hence, $s_{\nu+1}+1\in T$. 

(\emph{iv}): Note that $\tilde{R}_{k}+\tilde{R}_{k-1}=0$ for all$\mbox{ }2\leq k\leq n-2$.
Suppose that $n>5$. Then (\emph{i}), (\emph{ii}) and (\emph{iii})
show that we can apply Lemma \ref{lem:RkplusRk} which gives us that
$n-1\leq p+s_{\nu+1}+\alpha$. \end{proof}
\begin{thm}
\label{thm:AllBarker}Let $a$ be a Barker sequence of odd length
$n>5$ with $p>1$; then $n\leq13$ and either $r_{1}=r_{2}=3$ and
$r_{3}=1$ or $r_{1}\in\{3,5\}$ and $r_{2}=2$.\end{thm}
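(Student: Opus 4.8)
The plan is to run the strategy sketched before the lemma: assuming $a$ is Barker of odd length $n>5$ with $p:=r_{1}>1$, I combine the shape identities $\tilde{R}_{k}=(-1)^{k}$ for $1\le k\le n-2$ (Proposition~\ref{prop:Barker-Balanced}) with the arithmetic of the run length encoding until only the two listed possibilities survive. The starting data are that $r$ is balanced with $\gamma=\frac{n+1}{2}$, that $p$ is odd and $\ge 3$, that $s_{\nu+1}$ is odd, and---the engine of everything---the inequality $n\le p+s_{\nu+1}+\alpha+1$ of Lemma~\ref{lem:p_odd}. First I would bound $\nu$. Since $r_{1}=p$ and $r_{2},\dots,r_{\nu}$ are positive multiples of $p$ (each therefore $\ge p$), we have $s_{\nu}\ge\nu p$, so $\nu p+q\le s_{\nu+1}\le\gamma+1$ by Lemma~\ref{lem:sj_mod_q}(iv); on the other hand $n\le p+s_{\nu+1}+\alpha+1\le p+\gamma+\alpha+2$, that is, $\gamma\le p+\alpha+3$. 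Combining these gives $(\nu-1)p+q\le\alpha+4$, and since $p\ge 3$, $q\ge 1$, $\alpha\le 1$, this forces $\nu\le 2$.

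For $\nu=2$ the bound reads $p+q\le\alpha+4\le 5$; with $p\ge 3$, $q\ge 1$ and the fact that $\alpha=1$ exactly when $q\bmod p=1$, the only consistent solution is $p=3$, $q=r_{3}=1$, $\alpha=1$. Feeding $s_{3}=p+r_{2}+q\le\gamma+1$ back in, with $r_{2}$ a positive multiple of $3$, gives $r_{2}=3$, so $r_{1}=r_{2}=3$ and $r_{3}=1$, while $n\le 2p+2\alpha+5=13$. This is the first alternative.

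For $\nu=1$ I first pin down $q=r_{2}$. Since $s_{2}=p+q$ is odd and $p$ is odd, $q$ is even, and the inequality gives $q\le\alpha+4\le 5$, so $q\in\{2,4\}$. I rule out $q=4$ using Lemma~\ref{lem:balTail} with $\mu=2$: the last $s_{2}-2$ runs are all $\le 2$, and here $n$ is squeezed to $n=2p+5$ with $\gamma=p+3$, so that range reaches back to index $2$, forcing $r_{2}\le 2$ and contradicting $q=4$. Hence $q=r_{2}=2$, $\alpha=0$, whence $n\le 2p+3$ while $s_{2}=p+2\le\gamma+1$ gives $n\ge 2p+1$; thus $n\in\{2p+1,2p+3\}$, and by Lemma~\ref{lem:balTail} the tail (all runs after $r_{2}$) consists of $1$'s together with at most one further $2$.

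The decisive step is to bound $p$, where the linear inequalities are useless since they only locate $n$ near $2p$. For this I would extract a \emph{doubling criterion} from the mod-$4$ information: for every even $k$ with $2\le k\le n-2$ one has $\tilde{R}_{k}=(-1)^{k}=1\equiv 1\pmod 4$, and evaluating $\tilde{R}_{k}=f(k)+2U_{k}$ modulo $4$ with Lemma~\ref{lem:2kinSmod4} when $k\in S$, and with the parity of $U_{k}$ from Lemma~\ref{lem:Ukmod} together with the sign of $f(k)$ from Lemma~\ref{lem:fT} when $k\in T$, yields in both cases the clean equivalence
\[
k\in S\quad\Longleftrightarrow\quad k/2\in S\qquad(k\text{ even},\ 2\le k\le n-2).
\]
Now $\min S=s_{1}=p$ is odd, so no even element of $S$ may lie in $(p,2p)$, since its half would fall below $\min S$. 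Tracking the parities of $s_{2}=p+2,s_{3},\dots$, which increase by $1$ except for the at most one step of size $2$, the first even partial sum occurs at $p+3$ (when $r_{3}=1$) or at $p+5$ (when $r_{3}=2$); requiring it not to lie in $(p,2p)$ forces $p\le 3$ unless $r_{3}=2$, and in the latter case $p\le 5$. The downward instance of the criterion at $k=2p$, which demands $2p\in S$, then eliminates the residual encodings with $n\in\{9,11\}$, leaving exactly $p=3$ with $r_{2}=2$ and $p=5$ with $r_{2}=r_{3}=2$; both satisfy $r_{1}\in\{3,5\}$, $r_{2}=2$ and $n\le 13$, the second alternative. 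I expect the main difficulty to be precisely this last combinatorial bookkeeping---establishing the $T$-case of the doubling criterion and checking that balancedness together with it rules out every placement of the tail's second run of length $2$ except the admissible one---since this is where the proof must go beyond counting and exploit the arithmetic of the doubling map on $S$.
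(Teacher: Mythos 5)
Your proposal is correct, and it reproduces the paper's overall skeleton --- the bound $\nu\le 2$ from combining Lemma \ref{lem:sj_mod_q}(iv) with Lemma \ref{lem:p_odd}(i),(iv), the identical treatment of $\nu=2$, and the parity argument giving $q\in\{2,4\}$ for $\nu=1$ --- but it executes the two decisive sub-steps differently. First, you eliminate $q=4$ uniformly in $p$: the squeeze $s_{2}=p+4\le\gamma+1$ against $n\le 2p+q+\alpha+1$ forces $n=2p+5$, $\gamma=p+3$, and then Lemma \ref{lem:balTail} with $\mu=2$ reaches back to index $2$ and forces $r_{2}\le 2$, a contradiction. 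The paper instead first proves $p\le 5$ and then kills $q=4$ for $p\in\{3,5\}$ separately, by writing out the balanced tail $(\dots,1,1,2,1,\dots,1)$ explicitly and noting its minimal length exceeds $2p+q+1$; your version avoids that case split. Second, and more substantially, you bound $p$ by applying the ``doubling'' consequence of Lemma \ref{lem:2kinSmod4} at the \emph{front} of the sequence: the first even element of $S$ (namely $p+3$ if $r_{3}=1$, or $p+5$ if $r_{3}=2$) would have its half below $\min S=p$ unless $p\le 3$, respectively $p\le 5$. The paper applies the same lemma at the \emph{back}: for $p\ge 7$ the last six runs are $1$'s, so $n-5$ and $n-3$ are even elements of $S$, their halves are consecutive elements $s,s+1$ of $S$, i.e.\ a run of length $1$ with $s\ge p+q$, giving $n\ge 2p+2q+5>2p+q+1$. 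Both are sound; yours additionally pins down $r_{3}$ when $p=5$, which the paper only recovers later in Corollary \ref{thm:barkerRLE-1-1}. Your two-sided criterion $k\in S\Leftrightarrow k/2\in S$ (for even $k\le n-2$) does check out --- the $T$-direction follows from Lemmas \ref{lem:fT} and \ref{lem:Ukmod} exactly as you indicate --- but note that the theorem itself only needs the forward ($S$-to-$S$) direction, which is Lemma \ref{lem:2kinSmod4} verbatim; the converse is only needed for your extra elimination of residual encodings, which belongs to the corollary rather than the theorem. Minor slack (your $\nu=2$ bound $n\le 2p+2\alpha+5=13$ versus the paper's $n\le 12$) is immaterial to the statement.
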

\begin{proof}
Lemma \ref{lem:sj_mod_q} (\emph{iv}) together with Lemma \ref{lem:p_odd}
(\emph{i}) and (\emph{iv}) give us that $2s_{\nu+1}\leq2\gamma+2=n+3\leq p+s_{\nu+1}+\alpha+4$;
thus we have $\nu p+q\leq s_{\nu+1}\leq p+\alpha+4$ and therefore
$\nu\leq\frac{p+4-(q-\alpha)}{p}$. Hence, $1\leq\nu\leq2$ and $q\leq4$.
Moreover, this shows also that $\nu=2$ is only possible if $p=3$,
$r_{2}=3$ and $q=1$. Hence for $\nu=2$ we have $r_{1}=r_{2}=3$,
$r_{3}=1$, $\alpha=1$ and $s_{3}=7$ and therefore $n\leq12$ by
Lemma \ref{lem:p_odd} (\emph{iv}).

Next suppose that $\nu=1$. By Lemma \ref{lem:p_odd} (\emph{ii})
$s_{2}$ is odd and so $q$ must be even; hence either $q=2$ or $q=4$.
In particular, we have $n\leq2p+q+1$ by Lemma \ref{lem:p_odd} (\emph{iv}).We
claim that $p\leq5.$ Assume that $p>5$. Then $p\geq7$ and $r_{\gamma}=r_{\gamma-1}=\cdots=r_{\gamma-5}=1$
and therefore $\{n-5,n-4,\cdots,n-1\}\subseteq S$. Lemma \ref{lem:2kinSmod4}
gives us that $2s=n-5$ for some $s\in S$. Again by Lemma \ref{lem:2kinSmod4}
it follows that $s+1$ is also in $S$. Since $q\geq2$ we have $s\geq p+q$
and thus $n\geq2p+2q+5$, which is not possible. This shows that $p\leq5$.

It remains to show that if $\nu=1$ then $q=4$ is not possible. If
$p=5$, $\nu=1$ and $q=4$, then $r=(5,4,\cdots,1,1,2,1,1,1,1)$
because $r$ is balanced; but this is not possible since then $n\leq2p+q+1=15$
by Lemma \ref{lem:p_odd} (\emph{iv}). Similarly, if $p=3$, $\nu=1$
and $q=4$ then $r=(3,4,\cdots,1,1,2,1,1)$, which is not possible
since then $n\leq2p+q+1=11$. Hence, $q=2$ and $n\leq13$ if $\nu=1$.\pagebreak{}\end{proof}
\begin{cor}
\label{cor:No-Barker}There exists no Barker sequences of odd length
$n>13$.\end{cor}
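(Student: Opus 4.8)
The plan is to deduce Corollary~\ref{cor:No-Barker} from Theorem~\ref{thm:AllBarker} by disposing of the two hypotheses that Theorem~\ref{thm:AllBarker} carries but the Corollary does not, namely the normalisation $p=r_1>1$ and the length restriction $n>5$. First I would observe that Theorem~\ref{thm:AllBarker} already gives $n\leq 13$ for every Barker sequence of odd length with $n>5$ and $r_1>1$, so the only gap is the case $r_1=1$. The key reduction is that a Barker sequence may be negated and reversed without changing the magnitudes $|C_k|$ of its autocorrelations: negation replaces $a$ by $-a$, leaving each $C_k$ unchanged, while reversal replaces $a=(a_1,\dots,a_n)$ by $(a_n,\dots,a_1)$, which also preserves every $C_k$. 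Hence from any Barker sequence $a$ of odd length I may pass to an equivalent Barker sequence whose first two entries agree, i.e.\ with $a_1=a_2$ and therefore $r_1=p>1$. This is exactly the ``without loss of generality $a_1=a_2$'' remark made in the proof sketch preceding Lemma~\ref{lem:sj_mod_q}.

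Granting that normalisation, I would argue as follows. Suppose, for contradiction, that $a$ is a Barker sequence of odd length $n>13$. Replacing $a$ by its negation and/or reversal as above, I may assume $r_1=p>1$. Since $n>13>5$, Theorem~\ref{thm:AllBarker} applies and yields $n\leq 13$, contradicting $n>13$. Therefore no Barker sequence of odd length $n>13$ exists. The entire content of the Corollary is thus immediate once the equivalence-under-negation-and-reversal step is in place, because the substantive bound $n\leq 13$ has already been established in the Theorem.

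The step I expect to require the most care is justifying that one can always normalise to $r_1>1$. Concretely, I must check that at least one of the four sequences obtained from $a$ by optionally negating and optionally reversing has equal first two entries. If $a_1=a_2$ we are already done with $a$ itself. If $a_1\neq a_2$, then the first run of $a$ has length $r_1=1$; reversing $a$ moves the \emph{last} run to the front, and the last run of a Barker sequence of odd length has length $r_\gamma$, so reversal helps precisely when $r_\gamma>1$. The genuinely delicate point is to rule out the degenerate possibility that \emph{both} $r_1=1$ and $r_\gamma=1$ for every sequence in the equivalence class; here I would invoke the skew-symmetry of Barker sequences of odd length (Proposition~\ref{prop:Barker-Balanced} together with Proposition~\ref{prop:-balanced-is-skew-symmetric}), or simply note that for $n>13$ one cannot have a Barker sequence all of whose relevant runs have length $1$, since that would force $\gamma=n$, contradicting $\gamma=\frac{n+1}{2}$. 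I would therefore handle the small residual configurations explicitly and confirm that in each the normalisation succeeds, after which the Theorem closes the argument.
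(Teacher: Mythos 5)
Your proof is correct and is essentially the paper's own argument: reversal preserves all aperiodic autocorrelations and reverses the run length encoding, so once one knows that $r_1=1$ and $r_\gamma=1$ cannot both hold, one may assume $p>1$ and apply Theorem~\ref{thm:AllBarker} (legitimate since $n>13>5$) to get the contradiction $n\leq 13$. Your first justification of the delicate point is the right one and matches the paper's remark that $r_1=1$ forces $r_\gamma>1$: by Proposition~\ref{prop:Barker-Balanced} the run length encoding is balanced, so $S\cap T=\varnothing$, whereas $r_1=r_\gamma=1$ would put $s_1=t_1=1$ in both sets. Be aware, though, that your fallback justification is a non sequitur---$r_1=r_\gamma=1$ constrains only the first and last runs and in no way forces all runs to have length $1$ or $\gamma=n$---and that negation is superfluous, since it leaves the run length encoding unchanged and only reversal does any work.
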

\begin{proof}
Let $a$ be a Barker sequence of odd length $n$. Note that if $n>$1
and $p=r_{1}=1$ then $r_{\gamma}>1$. Note further that the binary
sequence $(a_{n},a_{n-1},\cdots,a_{1})$, whose run length encoding
is given by $(r_{\gamma},r_{\gamma-1},\cdots,r_{1})$ is also a Barker
sequence of the same length $n$. Hence, without loss of generality
we can assume that $p>1$. By Theorem \ref{thm:AllBarker} we have
$n\leq13$. \end{proof}
\begin{cor}
\label{thm:barkerRLE-1-1} $a$ is a Barker sequence of odd length
$n$ with $p>1$ if and only if $r=(2,1)$ , $r=(3,1,1)$, $r=(3,2,1,1)$,
$r=(3,3,1,2,1,1)$ or $r=(5,2,2,1,1,1,1)$. \end{cor}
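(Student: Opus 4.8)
The plan is to prove the two implications separately: sufficiency by direct verification, and necessity by feeding the earlier structural results into a short finite case analysis.

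For the \emph{if} direction I would write down a representative sequence for each of the five encodings and evaluate its off-peak aperiodic autocorrelations straight from (\ref{eq:ck}). Thus $(2,1)$ gives $(1,1,-1)$, $(3,1,1)$ gives $(1,1,1,-1,1)$, $(3,2,1,1)$ gives $(1,1,1,-1,-1,1,-1)$, and $(3,3,1,2,1,1)$ and $(5,2,2,1,1,1,1)$ give the classical length-$11$ and length-$13$ Barker sequences; in each case one verifies $|C_k|\le1$ for every $k\ge1$. Negation leaves every $C_k$ unchanged, so the leading sign is immaterial, and since each listed encoding plainly has $p>1$ and odd length, sufficiency follows.

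For the \emph{only if} direction, let $a$ be a Barker sequence of odd length $n$ with $p=r_1>1$. By Proposition \ref{prop:Barker-Balanced}, $r$ is balanced with $\gamma=\frac{n+1}{2}$, and a balanced encoding with $p>1$ satisfies $r_\gamma=1$; moreover $n\le13$ by Corollary \ref{cor:No-Barker}. It therefore suffices to run through $n\in\{3,5,7,9,11,13\}$. The cases $n=3$ and $n=5$ I would settle by hand: balancedness fixes $\gamma$ and $r_\gamma=1$, leaving only a couple of encodings with $p>1$, and a direct autocorrelation check retains exactly $(2,1)$ and $(3,1,1)$. For $n\in\{7,9,11,13\}$ I would invoke Theorem \ref{thm:AllBarker}, which forces the leading runs into one of $r_1=r_2=3,\ r_3=1$ or $r_1\in\{3,5\},\ r_2=2$, together with the length restrictions obtained in its proof (the first pattern needs $n\le12$, the pattern $r_1=3,\ r_2=2$ needs $n\le9$, and $r_1=5,\ r_2=2$ needs $n\le13$). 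For each admissible pair consisting of a value of $n$ and a leading pattern, the constraints $\gamma=\frac{n+1}{2}$ and $r_\gamma=1$ leave only finitely many completions of $r$, which I would enumerate.

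Each candidate completion is screened in two stages. First, balancedness is tested cheaply: by (\ref{eq:Sakplus1}) we have $k\in S\Leftrightarrow n-k\in T$, so condition (\ref{eq:SvT}) holds exactly when $S$ and $\{n-s:s\in S\}$ are disjoint, which is read off immediately from the partial sums $s_j$; this already discards most completions (for instance $(3,3,1,1,1)$ for $n=9$). Second, the balanced survivors are tested for the Barker property by computing their autocorrelations directly, equivalently by checking $\tilde R_k=(-1)^k$ for $1\le k\le n-2$ through (\ref{eq:rk-def}) and Theorem \ref{thm:main_aperiodic-1}. I expect this second stage to be the main obstacle, since some encodings are balanced yet not Barker — for example $(2,2,1)$ for $n=5$, $(3,2,2,1,1)$ for $n=9$, and $(5,2,1,1,1,1)$ for $n=11$ each violate $|C_2|\le1$ — so balancedness alone does not decide the matter and the autocorrelation computation cannot be sidestepped. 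Carrying the enumeration through leaves exactly $(3,2,1,1)$ for $n=7$, $(3,3,1,2,1,1)$ for $n=11$, and $(5,2,2,1,1,1,1)$ for $n=13$, with no admissible encoding for $n=9$; together with the two small cases this yields precisely the five encodings in the statement.
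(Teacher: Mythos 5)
Your proposal is correct and takes essentially the same route as the paper: sufficiency by direct verification of the five encodings, and necessity via Proposition \ref{prop:Barker-Balanced}, Theorem \ref{thm:AllBarker} and the bounds from Lemma \ref{lem:p_odd} (\emph{iv}), reduced to a finite check. The only cosmetic difference is the final elimination step: where the paper pins down the tail of the balanced encoding and rules out $(3,2,2,1,1)$ and $(5,2,1,1,1,1)$ via $\tilde{R}_{6}$ and Lemma \ref{lem:2kinSmod4}, you enumerate the finitely many completions, filter by balancedness, and discard the same two survivors by computing $|C_{2}|=3$ directly --- an equivalent test by Theorem \ref{thm:main_aperiodic-1}.
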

\begin{proof}
It is easy to check that if $r=(2,1)$ , $r=(3,1,1)$, $r=(3,2,1,1)$,
$r=(3,3,1,2,1,1)$ or $r=(5,2,2,1,1,1,1)$ then $a$ is a Barker sequence
of odd length $n$ with $p>1$. 

Conversely, let $a$ be a Barker sequence of odd length $n$ with
$p>1$. Proposition \ref{prop:Barker-Balanced} shows that $r$ is
balanced. Thus, $n=3$ implies $r=(2,1)$ and $n=5$ implies $r=(3,1,1)$
since by Lemma \ref{lem:p_odd} (\emph{i}) $p$ is odd if $n>3$.
For $n>5$ Theorem \ref{thm:AllBarker} shows that either $r_{1}=r_{2}=3$
and $r_{3}=1$ or $r_{1}\in\{3,5\}$ and $r_{2}=2$. We will consider
these three cases separately. 

Suppose that $r_{1}=r_{2}=3$ and $r_{3}=1$. Then $r=(3,3,1,\cdots,2,1,1)$
and thus $n\geq11$. On the other hand, Lemma \ref{lem:p_odd} (\emph{iv})
shows that $n\leq12$ . Hence, we have $n=11$ and $r=(3,3,1,2,1,1)$.

Suppose next that $r_{1}=3$ and $r_{2}=2$. Then $r=(3,2,\cdots,1,1)$
and $r_{\gamma-2}=2$. Moreover, $n\leq9$ by Lemma \ref{lem:p_odd}
(\emph{iv}) and thus $\gamma\leq5$. Therefore, either $r=(3,2,1,1)$
or $r=(3,2,2,1,1)$. If $r=(3,2,2,1,1)$ were a Barker sequence then
we would have by (\ref{eq:Rk=00003D1^k}) $\tilde{R}_{6}=1$ but (\ref{eq:rk-def})
gives us $\tilde{R}_{6}=f(6)-2(f_{T}(3)+f_{T}(1))=(-1)^{6+2}-2=-1.$
Hence, $(3,2,2,1,1)$ is not a Barker sequence. Therefore, we have
in this case $r=(3,2,1,1)$.

Finally, suppose that $r_{1}=5$ and $r_{2}=2$. Then $r=(5,2,\cdots,1,1,1,1)$
and $r_{\gamma-4}=2$. Moreover, $n\leq13$ by Lemma \ref{lem:p_odd}
(\emph{iv}) and thus $\gamma\leq7$. Therefore, either $r=(5,2,1,1,1,1)$
or $r=(5,2,2,1,1,1,1)$. However, if $r=(5,2,1,1,1,1)$ then $s_{3}=8$
is even and by (\ref{eq:Rk=00003D1^k}) $\tilde{R}_{8}=1$, which
contradicts Lemma \ref{lem:2kinSmod4}. Hence, $(5,2,1,1,1,1)$ is
not a Barker sequence. Therefore, we have in this case $r=(5,2,2,1,1,1,1)$. 
\end{proof}

\specialsection{Appendix A}

For the reader's convenience we prove in this appendix all used results
from \cite{willms2013RunStructure6506978}. 
\begin{thm*}
Let $1\leq k<n$; then
\[
C_{k+1}-2C_{k}+C_{k-1}=-2R_{k}
\]
\end{thm*}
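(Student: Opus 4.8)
The plan is to express both sides through the single auxiliary function $f_S$ and thereby reduce the identity to a short boundary cancellation. First I would put $R_k$ into a forward-looking form. By the definition (\ref{eq:rk-def}), $R_k=(-1)^\gamma\tilde R_{n-k}=(-1)^\gamma\bigl(f(n-k)+2U_{n-k}\bigr)$, and two applications of the reflection identity (\ref{eq:f=00003Dg}) yield $(-1)^\gamma f(n-k)=f(k)$ together with $(-1)^\gamma f_T(n-k-s_j)=f_S(k+s_j)$, so that
\[
R_k=f(k)+2\sum_{j=1}^{\gamma-1}(-1)^j f_S(k+s_j).
\]
Since $(-1)^j=f_S(s_j)$ and $f_S$ vanishes off $S$, this sum is exactly the autocorrelation $\rho_k:=\sum_{m}f_S(m)f_S(m+k)$ of $f_S$ over $\mathbb Z$, whence $R_k=f(k)+2\rho_k$. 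It then remains to prove $C_{k+1}-2C_k+C_{k-1}=-2f(k)-4\rho_k$.

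The conceptual key is the observation that $f_S$ is, up to a fixed global sign, the first difference of $a$. Because consecutive runs alternate in value we have $a_{s_j}=(-1)^{j-1}a_1$ and $a_{s_j+1}=(-1)^j a_1$, and from this one checks directly that
\[
f_S(m)=\tfrac{a_1}{2}\,(a_{m+1}-a_m)\qquad(1\le m\le n-1),
\]
with $f_S(m)=0$ otherwise. Dually, since $a_n=(-1)^{\gamma-1}a_1$, the identity (\ref{eq:f=00003Dg}) converts this into $f_T(m)=\tfrac{a_n}{2}(a_{n-m}-a_{n-m+1})$. Substituting the first formula collapses $\rho_k$ into the finite difference $\tfrac14\sum_{m=1}^{n-1-k}(a_{m+1}-a_m)(a_{m+k+1}-a_{m+k})$, which telescopes against the autocorrelations.

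Finally I would expand this product into its four terms and regroup. The two terms whose indices differ by $k$ each rebuild $C_k$ up to one endpoint, while the two remaining terms rebuild $-C_{k-1}$ and $-C_{k+1}$ up to endpoints, giving
\[
4\rho_k=2C_k-C_{k-1}-C_{k+1}+a_1(a_k-a_{k+1})+a_n(a_{n-k+1}-a_{n-k}).
\]
Substituting back, the bulk terms reproduce precisely $C_{k+1}-2C_k+C_{k-1}$, and the claim reduces to the single relation $2f(k)=-a_1(a_k-a_{k+1})-a_n(a_{n-k+1}-a_{n-k})$, which is immediate from $f=f_S+f_T$ and the two difference formulas above. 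The hard part is exactly this last expansion: the four boundary contributions must be tracked with the correct signs and index ranges -- for it is precisely these that assemble into the term $2f(k)$ -- and the degenerate endpoints $k=1$ and $k=n-1$ (where $C_0=n$ and $C_n=0$) have to be accounted for along the way.
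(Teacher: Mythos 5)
Your proof is correct and is essentially the paper's own argument: the paper likewise exploits that $f_S$ is half the first difference of $a$ (its $\delta_i:=a_i-a_{i-1}$ with $2f_S(i)=a_1\delta_{i+1}$) and computes the autocorrelation $C_k(\delta)$ of the difference sequence in two ways, obtaining both $-C_{k+1}+2C_k-C_{k-1}$ and $2R_k$; indeed your intermediate identity $R_k=f(k)+2\sum_{j=1}^{\gamma-1}(-1)^jf_S(s_j+k)$ appears verbatim as a line in the paper's chain of equalities. The only cosmetic difference is bookkeeping: the paper pads $\delta$ with the boundary values $\delta_1=a_1$ and $\delta_{n+1}=(-1)^{\gamma}a_1$ so that the endpoint contributions you track separately (your relation $2f(k)=a_1(a_{k+1}-a_k)+a_n(a_{n-k}-a_{n-k+1})$) are absorbed into $C_k(\delta)$ from the start.
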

\begin{proof}
Put $a_{0}:=0$, $a_{n+1}:=0$ and $\delta_{i}:=a_{i}-a_{i-1}$ for
all $i=1,2,\cdots,n+1$. Furthermore define $C_{k}(\delta):=\sum_{i=1}^{n+1-k}\delta_{i}\delta_{i+k}$.
Let $1\leq k<n$; then as in the proof of Lemma 2 in \cite{willms2013RunStructure6506978}

\begin{eqnarray*}
C_{k}(\delta) & = & \sum_{i=1}^{n+1-k}(a_{i}-a_{i-1})(a_{i+k}-a_{i+k-1})\\
 & = & \sum_{i=1}^{n+1-k}a_{i}a_{i+k}-a_{i}a_{i+k-1}-a_{i-1}a_{i+k}+a_{i-1}a_{i+k-1}\\
 & = & (C_{k}(a)+a_{n+1-k}a_{n+1})-C_{k-1}(a)\\
 &  & -(a_{0}a_{k+1}+C_{k+1}(a)+a_{n-k}a_{n+1})+(a_{0}a_{k}+C_{k}(a))\\
 & = & -C_{k+1}(a)+2C_{k}(a)-C_{k-1}(a).
\end{eqnarray*}

Note that $\delta_{1}=a_{1}$, $\delta_{n+1}=(-1)^{\gamma}a_{1}$
and for $i=1,2,\cdots,n-1$ we have $2f_{S}(i)=a_{1}\delta_{i+1}$.
Hence, as in the proof of Theorem 6 in \cite{willms2013RunStructure6506978}
we have on the other hand

\begin{eqnarray*}
\frac{1}{2}C_{k}(\delta) & = & \frac{1}{2}(\delta_{1}\delta_{k+1}+\delta_{n+1-k}\delta_{n+1}+\sum_{i=2}^{n-k}\delta_{i}\delta_{i+k})\\
 & = & f_{S}(k)+(-1)^{\gamma}f_{S}(n-k)+2\sum_{i=1}^{n-k-1}f_{S}(i)f_{S}(i+k)\\
 & = & f_{S}(k)+f_{T}(k)+2\sum_{j=1}^{\gamma-1}f_{S}(s_{j})\cdot f_{S}(s_{j}+k)\\
 & = & f(k)+2\sum_{j=1}^{\gamma-1}(-1)^{j}f_{S}(s_{j}+k)\\
 & = & (-1)^{\gamma}f(n-k)+2\sum_{j=1}^{\gamma-1}(-1)^{j}(-1)^{\gamma}f_{T}(n-(k+s_{j})\\
 & = & (-1)^{\gamma}\tilde{R}_{n-k}=R_{k}
\end{eqnarray*}

which proves the theorem.\end{proof}
\begin{prop}
\label{prop:-balanced-is-skew-symmetric}$a$ is skew-symmetric if
and only if $r$ is balanced.\end{prop}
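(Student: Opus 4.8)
The plan is to introduce the auxiliary products $b_{i}:=a_{i}a_{n+1-i}$ for $1\leq i\leq n$ and to recast \emph{both} hypotheses as statements about the sequence $(b_{i})$. Since skew-symmetry is defined only for odd length and a balanced $r$ forces $n$ odd (as already observed after (\ref{eq:SvT})), I may write $n=2m-1$ throughout. The starting observation is that $b_{m}=a_{m}a_{n+1-m}=a_{m}^{2}=1$ holds unconditionally, and that $b_{n+1-i}=b_{i}$. Because the right-hand side of the candidate identity $b_{i}=(-1)^{m-i}$ also satisfies $(-1)^{m-(n+1-i)}=(-1)^{m-i}$, that identity is itself reflection-symmetric and hence need only be verified for $1\leq i\leq m-1$. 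With this remark, the defining relation (\ref{eq:skew}) is exactly the assertion that $a$ is skew-symmetric if and only if $b_{i}=(-1)^{m-i}$ for all $i$; this rewriting is routine.

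Next I would translate \emph{balanced} into a multiplicative condition. Using $s_{j}+t_{\gamma-j}=n$ from (\ref{eq:s+t}) one has $T=\{n-s:s\in S\}$, so that $k\in T\Leftrightarrow n-k\in S$. Hence (\ref{eq:SvT}) says precisely that for every $k\in\{1,\ldots,n-1\}$ exactly one of $k\in S$ and $n-k\in S$ holds. By (\ref{eq:Sakplus1}) this reads: exactly one of $a_{k}a_{k+1}=-1$ and $a_{n-k}a_{n-k+1}=-1$ holds. Since $b_{k}b_{k+1}=(a_{k}a_{k+1})(a_{n-k}a_{n-k+1})$, this is the same as $b_{k}b_{k+1}=-1$; thus $r$ is balanced if and only if $b_{k}b_{k+1}=-1$ for all $1\leq k\leq n-1$.

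The two characterisations are then matched through an elementary recurrence. If $a$ is skew-symmetric then $b_{i}=(-1)^{m-i}$, whence $b_{k}b_{k+1}=(-1)^{m-k}(-1)^{m-k-1}=-1$ for every $k$, so $r$ is balanced. Conversely, if $r$ is balanced then $b_{k+1}=-b_{k}$ for all $k$; solving this first-order recurrence with the automatic anchor $b_{m}=1$ gives $b_{i}=(-1)^{i-m}=(-1)^{m-i}$ for all $i$, which is skew-symmetry. The delicate bookkeeping is concentrated entirely in the middle step: correctly invoking $T=n-S$ together with the equivalence $k\in S\Leftrightarrow a_{k}a_{k+1}=-1$ to convert the set-theoretic partition condition (\ref{eq:SvT}) into the product statement, all while keeping the index $n-k$ inside the legal range $1\leq n-k\leq n-1$. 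Once this dictionary is in place, the fact that $b_{m}=1$ always supplies the single free anchor needed to integrate the recurrence, and both implications fall out symmetrically.
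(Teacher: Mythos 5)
Your proof is correct. At bottom it rests on the same dictionary as the paper's own proof: via (\ref{eq:Sakplus1}) and the reflection $k\in T\Leftrightarrow n-k\in S$ (which follows from (\ref{eq:s+t})), balancedness is converted into the product identity $(a_{k}a_{k+1})(a_{n-k}a_{n-k+1})=-1$ for all $1\leq k\leq n-1$, which is precisely the relation $a_{k}a_{k+1}=-a_{n-k}a_{n-k+1}$ that the paper manipulates. The genuine difference is organizational. The paper proves the forward implication by a chain of membership equivalences and the converse by an elementwise iteration from the centre ($a_{\gamma-1}=-a_{\gamma+1}$, hence $a_{\gamma-2}=a_{\gamma+2}$, ``and so on''), i.e.\ an induction left informal. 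You instead encode both properties in the single auxiliary sequence $b_{i}=a_{i}a_{n+1-i}$: skew-symmetry becomes $b_{i}=(-1)^{m-i}$ (using the reflection symmetry $b_{n+1-i}=b_{i}$ to cover $i>m$), balancedness becomes $b_{k}b_{k+1}=-1$, and the two characterizations are matched by solving the first-order recurrence $b_{k+1}=-b_{k}$ from the automatic anchor $b_{m}=1$. What this buys: the paper's informal ``and so on'' step becomes a one-line closed-form computation, and the two directions of the equivalence fall out symmetrically from the same pair of characterizations rather than requiring separate arguments. The cost is the extra translation step (balanced $\Leftrightarrow$ $b_{k}b_{k+1}=-1$ for all $k$), which you execute correctly, including the index-range check on $n-k$ and the observation that for $x,y\in\{-1,1\}$ ``exactly one of $x,y$ equals $-1$'' is equivalent to $xy=-1$.
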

\begin{proof}
Let $1\leq k<n$. Suppose that $a$ is skew-symmetric. Then $a_{k}a_{k+1}=-a_{n+1-k}a_{n-k}$
and thus by (\ref{eq:Sakplus1}) we have $k\in S\Leftrightarrow a_{k}a_{k+1}=-1\Leftrightarrow a_{n+1-k}a_{n-k}=1\Leftrightarrow n-k\notin S\Leftrightarrow k\notin T.$
Thus, $S\cap T=\varnothing$ and also $S\cup T=\{1,2,\cdots,n-1\}$.
Hence, $r$ is balanced.

Suppose that run length encoding $r$ of $a$ is balanced. Then $n-k\in T\Leftrightarrow n-k\notin S$
and thus by (\ref{eq:Sakplus1}) $a_{k}a_{k+1}=-1\Leftrightarrow k\in S\Leftrightarrow n-k\in T\Leftrightarrow n-k\notin S\Leftrightarrow a_{n-k}a_{n-k+1}=1$.
Hence, $a_{k}a_{k+1}=-a_{n-k}a_{n-k+1}$. Since $n=2\gamma-1$ we
therefore have $a_{\gamma-i}a_{\gamma-i+1}=-a_{\gamma+i-1}a_{\gamma+i}$
for all $i=1,2,\cdots,\gamma-1$. In particular, for $i=1$ we get
$a_{\gamma-1}=-a_{\gamma+1}$ and thus $a_{\gamma-2}=a_{\gamma+2}$
for $i=2$, $a_{\gamma-3}=-a_{\gamma+3}$ for $i=3$ and so on. Hence,
$a$ is skew-symmetric. 
\end{proof}
\bibliographystyle{ieeetr}
\bibliography{rleBarker1}

\bigskip{}

\end{document}